\newcommand{\qq}[1]{``#1''}
\newcommand{\modelstyle}[1]{\textsc{#1}\xspace}
\newcommand{\fsynch}{\modelstyle{FSynch}}
\newcommand{\ssynch}{\modelstyle{SSynch}}
\newcommand{\asynch}{\modelstyle{ASynch}}
\newcommand{\wait}{\modelstyle{Wait}}
\newcommand{\look}{\modelstyle{Look}}
\newcommand{\compute}{\modelstyle{Compute}}
\newcommand{\move}{\modelstyle{Move}}
\newcommand{\rend}{\modelstyle{Rendezvous}}
\newcommand{\gath}{\modelstyle{Gathering}}
\newcommand{\conv}{\modelstyle{Convergence}}
\newcommand{\oneh}{\nicefrac{1}{2}\xspace}
\newcommand{\rra}{Theorem~\ref{r1}\xspace}
\newcommand{\rrb}{Theorem~\ref{r2}\xspace}
\newcommand{\rrc}{Lemma~\ref{r3}\xspace}
\newcommand{\rrd}{Lemma~\ref{r4}\xspace}
\newcommand{\rre}{Lemma~\ref{r5}\xspace}
\newcommand{\rrf}{Lemma~\ref{r6}\xspace}
\newcommand{\rrg}{Observation~\ref{r7}\xspace}
\newcommand{\rrh}{Lemma~\ref{r8}\xspace}
\newcommand{\rrj}{Corollary~\ref{r10}\xspace}
\newcommand{\rrk}{Theorem~\ref{r11}\xspace}
\newcommand{\rrl}{Proposition~\ref{r12}\xspace}
\newcommand{\rrm}{Lemma~\ref{r13}\xspace}
\newcommand{\rrn}{Proposition~\ref{r14}\xspace}
\newcommand{\rro}{Theorem~\ref{r15}\xspace}
\newcommand{\rrp}{Lemma~\ref{r16}\xspace}
\newcommand{\rrq}{Lemma~\ref{r17}\xspace}
\newcommand{\rrr}{Lemma~\ref{r18}\xspace}
\newcommand{\rrs}{Lemma~\ref{r19}\xspace}
\newcommand{\rrt}{Theorem~\ref{r20}\xspace}
\newcommand{\rrw}{Lemma~\ref{r23}\xspace}
\newcommand{\rrx}{Theorem~\ref{r24}\xspace}
\newcommand{\rry}{Lemma~\ref{r25}\xspace}
\newcommand{\rrz}{Observation~\ref{r26}\xspace}
\newtheorem{theorem}{Theorem}[section]
\newtheorem{corollary}[theorem]{Corollary}
\newtheorem{proposition}[theorem]{Proposition}
\newtheorem{lemma}[theorem]{Lemma}
\newtheorem{observation}[theorem]{Observation}
\begin{document}
\title{Rendezvous of two robots with visible bits}
\author{Giovanni Viglietta\\
\texttt{viglietta@gmail.com}}

\maketitle

\begin{abstract}
We study the rendezvous problem for two robots moving in the plane (or on a line). Robots are autonomous, anonymous, oblivious, and carry colored lights that are visible to both. We consider deterministic distributed algorithms in which robots do not use distance information, but try to reduce (or increase) their distance by a constant factor, depending on their lights' colors.

We give a complete characterization of the number of colors that are necessary to solve the rendezvous problem in every possible model, ranging from fully synchronous to semi-synchronous to asynchronous, rigid and non-rigid, with preset or arbitrary initial configuration.

In particular, we show that three colors are sufficient in the non-rigid asynchronous model with arbitrary initial configuration. In contrast, two colors are insufficient in the rigid asynchronous model with arbitrary initial configuration and in the non-rigid asynchronous model with preset initial configuration.

Additionally, if the robots are able to distinguish between zero and non-zero distances, we show how they can solve rendezvous and detect termination using only three colors, even in the non-rigid asynchronous model with arbitrary initial configuration.
\end{abstract}

\section{Introduction}\label{s1}

\subsection{Models for mobile robots}
The basic robot model we employ has been thoroughly described in~\cite{gather,visbits,book,survey}. Robots are modeled as points freely moving in $\mathbb R^2$ (or $\mathbb R$). Each robot has its own coordinate system and its own unit distance, which may differ from the others. Robots operate in cycles that consist of four phases: \wait, \look, \compute, and \move.

In a \wait phase a robot is idle; in a \look phase it gets a snapshot of its surroundings (including the positions of the other robots); in a \compute phase it computes a destination point; in a \move phase it moves toward the destination point it just computed, along a straight line. Then the cycle repeats over and over.

Robots are anonymous and oblivious, meaning that they do not have distinct identities, they all execute the same algorithm in each \compute phase, and the only input to such algorithm is the snapshot coming from the previous \look phase.

In a \move phase, a robot may actually reach its destination, or it may be stopped before reaching it. If a robot always reaches its destination by the end of each \move phase, then the model is said to be \emph{rigid}. If a robot can unpredictably be stopped before, the model is \emph{non-rigid}. However, even in non-rigid models, during a \move phase, a robot must always be found on the line segment between its starting point and the destination point. Moreover, there is a constant distance $\delta>0$ that a robot is guaranteed to walk at each cycle. That is, if the destination point that a robot computes is at most $\delta$ away (referred to some global coordinate system), then the robot is guaranteed to reach it by the end of the next \move phase. On the other hand, if the destination point is more than $\delta$ away, the robot is guaranteed to approach it by at least $\delta$.

In the basic model, robots cannot communicate in conventional ways or store explicit information, but a later addition to this model allows each robot to carry a ``colored light'' that is visible to every robot (refer to~\cite{visbits}). There is a fixed amount of possible light colors, and a robot can compute its destination and turn its own light to a different color during a \compute phase, based on the light colors that it sees on other robots and on itself. Usually, when robots start their execution, they have all their lights set to a predetermined color. However, we are also interested in algorithms that work regardless of the initial color configurations of the robots.

In the fully synchronous model (\fsynch) all robots share a common notion of time, and all their phases are executed synchronously. The semi-synchronous model (\ssynch) is similar, but not every robot may be active at every cycle. That is, some robots are allowed to ``skip'' a cycle at unpredictable times, by extending their \wait phase to the whole cycle. However, the robots that are active at a certain cycle still execute it synchronously. Also, no robot can remain inactive for infinitely many consecutive cycles. Finally, in the asynchronous model (\asynch) there is no common notion of time, and each robot's execution phase may last any amount of time, from a minimum $\epsilon>0$ to an unboundedly long, but finite, time.

Figure~\ref{f0} shows all the possible models arising from combining synchronousness, rigidity, and arbitrarity of the initial light colors. The trivial inclusions between models are also shown.

\begin{figure}[ht]
\centering{\includegraphics[width=\linewidth]{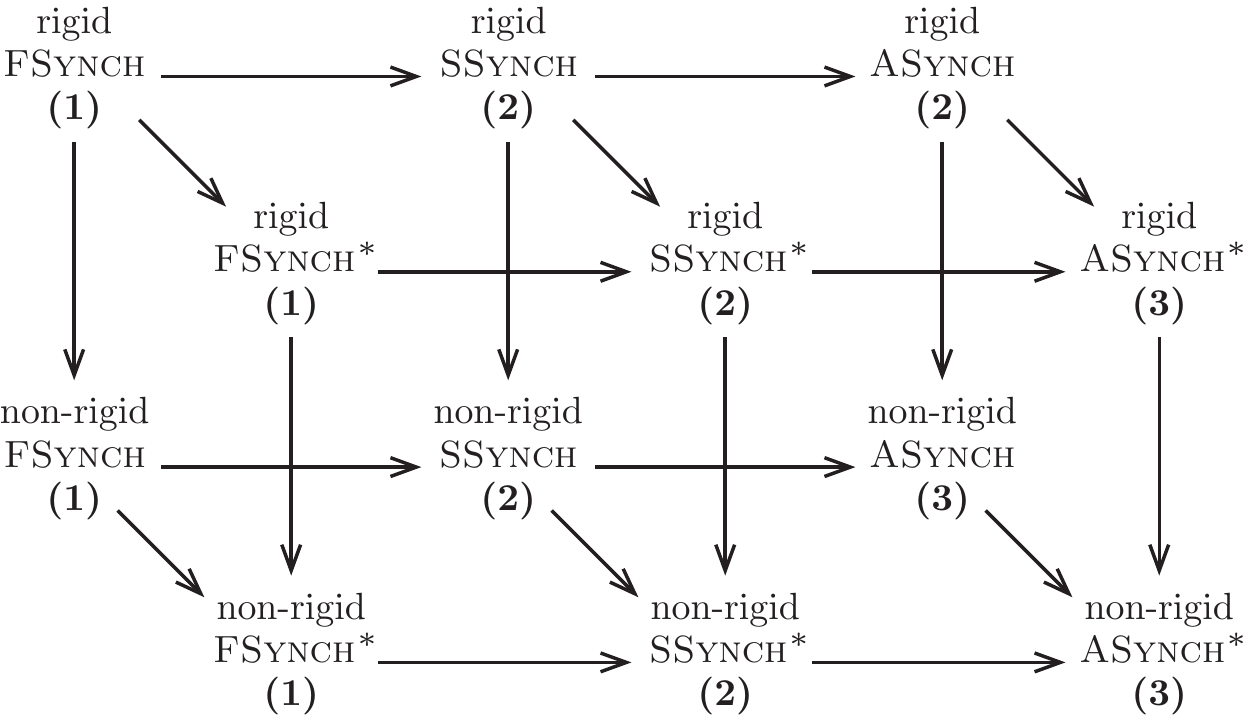}}
\caption{Robot models with their trivial inclusions. An asterisk means that the initial color configuration may be arbitrary; no asterisk means that it is fixed. The numbers indicate the minimum amounts of distinct colors that are necessary to solve \rend in each model (cf.~\rrt).}
\label{f0}
\end{figure}

Without loss of generality, in this paper we will assume \look phases in \asynch to be instantaneous, and we will assume that a robot's light's color may change only at the very end of a \compute phase.

\subsection{Gathering mobile robots}

\gath is the problem of making a finite set of robots in the plane reach the same location in a finite amount of time, and stay there forever, regardless of their initial positions. Such location should not be given as input to the robots, but they must implicitly determine it, agree on it, and reach it, in a distribute manner. Note that this problem is different from \conv, in which robots only have to approach a common location, but may never actually reach it.

For any set of more than two robots, \gath has been solved in non-rigid \asynch, without using colored lights (see~\cite{gather}). The special case with only two robots is also called \rend, and it is easily seen to be solvable in non-rigid \fsynch but unsolvable in rigid \ssynch, if colored lights are not used (see~\cite{ssynch}).

\begin{proposition}\label{r14}
If only one color is available, \rend is solvable in non-rigid \fsynch and unsolvable in rigid \ssynch.
\end{proposition}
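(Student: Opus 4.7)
My plan is to prove the two halves separately: a direct construction for the \fsynch claim, and an adversarial scheduling argument (in the spirit of the classical impossibility results for oblivious robots without colored lights) for the \ssynch claim.

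For the positive claim, I will use the ``move-to-midpoint'' algorithm: on being active, a robot moves toward the midpoint $M$ of the two positions. In \fsynch both robots sample simultaneously and compute the same $M$. Each robot's target lies at distance $D/2$ (where $D$ is the current inter-robot distance), and the non-rigidity guarantee forces each one to advance by at least $\min(\delta,D/2)$ toward $M$. If $D/2 \le \delta$, both robots reach $M$ in the same cycle; otherwise $D$ strictly decreases by at least $2\delta$. Induction on $\lceil D_0/(2\delta)\rceil$ yields rendezvous in finitely many cycles, and afterwards $M$ coincides with the common position, so neither robot ever moves again.

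For the negative claim, assume that some one-color algorithm $A$ solves \rend in rigid \ssynch; I will exhibit an admissible schedule under which the two robots never coincide. With only one color, a snapshot reduces to the other robot's position in the local frame, and anonymity forces both robots to execute the same deterministic map $\phi$. Start from a symmetric initial configuration (equal unit distances, mirror-image local frames). Under the \fsynch schedule---which is a valid \ssynch schedule---the configuration remains symmetric about its midpoint $M$ at every cycle, so rendezvous along this trajectory can occur only at $M$. Consequently, at some first cycle the algorithm must prescribe each robot to jump onto $M$---equivalently, for some reachable distance $D^\star$, the map $\phi$ sends the observer exactly halfway toward the other robot.

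From that moment on, the adversary switches to a strict single-activation schedule, alternating the two robots (which is admissible in \ssynch). Since the sole active robot moves to the current midpoint, the inter-robot distance is exactly halved every cycle and never reaches zero, contradicting rendezvous. The main obstacle is that this ``halve-forever'' analysis presumes $\phi$ keeps sending the observer to the midpoint also at the distances $D^\star/2, D^\star/4, \ldots$ encountered later, whereas in general $\phi$ may behave differently there. I intend to close the remaining cases by an inductive dichotomy: at each cycle, depending on whether $\phi$ prescribes a move to the midpoint, onto the other robot, or elsewhere, the adversary selects single-activation (which halves the distance), double-activation under \fsynch (which, in the ``onto the other'' case, merely swaps the two positions), or a suitable combination---in every subcase maintaining a strictly positive distance forever. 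The planar version of the statement reduces to this one-dimensional analysis via the adversary's freedom to choose each local orientation.
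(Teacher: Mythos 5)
Your proposal is correct and follows essentially the same route as the paper: the move-to-midpoint algorithm for non-rigid \fsynch, and for rigid \ssynch an adversary that keeps the two local frames point-symmetric, activates both robots whenever the prescribed move is not the midpoint (so they can only meet at the midpoint), and single-activates one robot (alternating, for fairness) whenever the midpoint is prescribed. The ``inductive dichotomy'' you flag as the remaining work is exactly the paper's argument, and it closes cleanly since double activation suffices for every non-midpoint move.
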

\begin{proof}
In non-rigid \fsynch, consider the algorithm that makes each robot move to the midpoint of the current robots' positions. At each move, the distance between the two robots is reduced by at least $2\delta$, until it becomes less than $2\delta$, and the robots gather.

Suppose that an algorithm exists that solves \rend in rigid \ssynch by using just one color. Let us assume that the two robots' axes are oriented symmetrically, in opposite directions. This implies that, if we activate both robots at each cycle, they obtain isometric snapshots, and thus they make moves that are symmetric with respect to their current midpoint. Therefore, by doing so, the robots can never meet unless they compute the midpoint. If they do it, we just activate one robot for that cycle (and each time this happens, we pick a different robot, alternating). As a result, the robots never meet, regardless of the algorithm.
\end{proof}

However, in~\cite{visbits} it was shown how \rend can be solved even in non-rigid \asynch using lights of four different colors, and starting from a preset configuration of colors. Optimizing the amount of colors was left as an open problem.

\subsection{Our contribution}

In this paper, we will determine the minimum number of colors required to solve \rend in all models shown in Figure~\ref{f0}, with some restrictions on the class of available algorithms.

Recall that robots do not necessarily share a global coordinate system, but each robot has its own. If the coordinate system of a robot is not even self-consistent (i.e., it can unpredictably change from one cycle to another), then the only reliable reference for each robot is the position of the other robot(s) around it. In this case, the only type of move that is consistent will all possible coordinate systems is moving to a linear combination of the robots' positions, whose coefficients may depend on the colored lights. In particular, when the robots are only two, we assume that each robot may only compute a destination point of the form
$$(1-\lambda)\cdot me.position + \lambda\cdot other.position,$$
for some $\lambda\in \mathbb R$. In turns, $\lambda$ is a function of $me.light$ and $other.light$ only. This class of algorithms will be denoted by $\mathcal L$.

In Section~\ref{s2}, we will prove that two colors are sufficient to solve \rend in non-rigid \ssynch with arbitrary initial configuration and in rigid \asynch with preset initial configuration, whereas three colors are sufficient in non-rigid \asynch with arbitrary initial configuration. All the algorithms presented are of class $\mathcal L$.

On the other hand, in Section~\ref{s3} we show that even termination detection can be achieved in non-rigid \asynch with arbitrary initial configuration using only three colors, although our algorithm is not of class $\mathcal L$ (indeed, no algorithm of class $\mathcal L$ can detect termination in \rend).

In contrast, in Section~\ref{s4} we prove that no algorithm of class $\mathcal L$ using only two colors can solve \rend in rigid \asynch with arbitrary initial configuration or in non-rigid \asynch with preset initial configuration.

Finally, in Section~\ref{s5} we put all these results together and we conclude with a complete characterization of the minimum amount of colors that are needed to solve \rend in every model (see \rrt).

\section{Algorithms for rendezvous}\label{s2}

\subsection{Two colors for the non-rigid semi-synchronous model}

For non-rigid \ssynch, we propose Algorithm~\ref{alg1}, also represented in Figure~\ref{f1}. Labels on arrows indicate the color that is seen on the other robot, and the destination of the next \move with respect to the position of the other robot. \qq{0} stands for \qq{do not move}, \qq{\oneh} means \qq{move to the midpoint}, and \qq{1} means \qq{move to the other robot}. The colors used are only two, namely $A$ and $B$.

\begin{algorithm}\label{alg1}
\caption{Rendezvous for non-rigid \ssynch and rigid \asynch}
\DontPrintSemicolon
$me.destination \longleftarrow me.position$\;
\If{$me.light = A$}{
	\If{$other.light = A$}{
		$me.light \longleftarrow B$\;
		$me.destination \longleftarrow (me.position+other.position)/2$\;
	}
	\Else{
		$me.destination \longleftarrow other.position$\;
	}
}
\ElseIf{$other.light = B$}{
	$me.light \longleftarrow A$\;
}
\end{algorithm}

\begin{figure}[ht]
\centering{\includegraphics[scale=1.5]{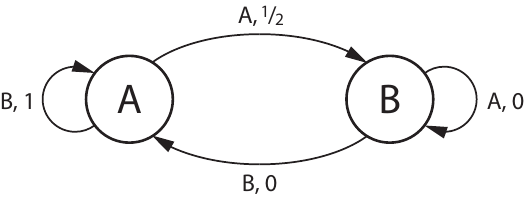}}
\caption{Illustration of Algorithm~\ref{alg1}}
\label{f1}
\end{figure}

\begin{lemma}\label{r3}
If the two robots start a cycle with their lights set to opposite colors, they eventually gather.
\end{lemma}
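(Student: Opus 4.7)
My plan is to exploit the asymmetry of Algorithm~\ref{alg1} under opposite-colored lights: a case split on the four possible $(me.light, other.light)$ combinations shows that from a configuration in which one robot has color $A$ and the other has color $B$, no transition can ever recolor either robot. First I would verify this color invariant. When $me.light = A$ and $other.light = B$, the algorithm takes the inner \emph{else} branch, setting $me.destination$ to the other robot's position without modifying $me.light$. When $me.light = B$ and $other.light = A$, the guard of the outer \emph{elseif} fails, so $me$ neither moves nor changes color. Hence in any cycle, irrespective of which subset of robots the \ssynch scheduler activates, the $A$-robot remains $A$ and the $B$-robot remains $B$.

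Next I would track the distance. Under this invariant the $B$-robot never moves, so the inter-robot distance can change only when the $A$-robot is active. When the $A$-robot is active, its computed destination is the current position of the $B$-robot, and by the non-rigid guarantee it either reaches that position or advances by at least $\delta$ along the segment joining the two robots. Thus each activation of the $A$-robot either causes the two robots to coincide or reduces their distance by at least $\delta$.

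To finish, I would invoke fairness. Because no robot can remain inactive for infinitely many consecutive cycles, the $A$-robot is activated infinitely often, so after a finite number of its activations the robots occupy a common point. The color invariant still holds at that moment, so the $A$-robot's destination is now its own position and the $B$-robot does nothing; consequently neither robot ever moves again, and the rendezvous is permanent.

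The main obstacle is nothing more than bookkeeping: I must enumerate every combination of active subset allowed by the \ssynch adversary and every sub-case of the non-rigid interruption, and confirm in each one that the opposite-color invariant survives and that either the robots meet or the $A$-robot makes at least $\delta$ of progress. Once this case analysis is in place, the fairness step and the \qq{stay together forever} conclusion are both immediate.
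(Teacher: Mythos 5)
Your proposal is correct and follows essentially the same route as the paper's proof: establish that the opposite colors are invariant, note that the $B$-robot never moves while the $A$-robot advances toward it by at least $\delta$ per active cycle (or reaches it outright), and conclude by fairness of the \ssynch scheduler. The paper's version is simply a more condensed statement of the same argument.
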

\begin{proof}
Both robots retain their colors at every cycle, and the $A$-robot keeps computing the other robot's location, while the $B$-robot keeps waiting. Hence, their distance decreases by at least $\delta$ for every cycle in which the $A$-robot is active, until the distance becomes smaller than $\delta$, and the robots gather.
\end{proof}

\begin{theorem}\label{r1}
Algorithm~\ref{alg1} solves \rend in non-rigid \ssynch, regardless of the colors in the initial configuration.
\end{theorem}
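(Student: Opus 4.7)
The plan is to reduce the general case to Lemma~\ref{r3} by showing that, starting from any light configuration, the robots either gather directly or reach a cycle in which their lights are opposite. I would first enumerate the four possible configurations $(A,A)$, $(A,B)$, $(B,A)$, $(B,B)$. The two asymmetric ones are already handled by \rrc, so the core task is to analyze the symmetric ones under every choice the \ssynch scheduler can make.

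Next I would perform case analysis on who is active. In configuration $(B,B)$: if both robots are active they set their lights to $A$ without moving, turning the state into $(A,A)$ at the same positions; if only one is active, it turns $A$ while the other remains $B$, producing an opposite configuration to which \rrc applies. In configuration $(A,A)$: if only one robot is active, it moves toward the midpoint and becomes $B$, yielding an opposite configuration and again invoking \rrc; if both are active, both switch to $B$ and both move toward the midpoint. For this last sub-case I would argue using the non-rigid $\delta$ guarantee: if the distance $d$ satisfies $d/2\le\delta$ each robot reaches the midpoint and the robots gather, while if $d/2>\delta$ each robot travels at least $\delta$ along the segment, so the distance decreases by at least $2\delta$ and the new state is $(B,B)$.

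Combining the two symmetric cases, as long as the scheduler activates both robots at every cycle the state alternates $(A,A)\to(B,B)\to(A,A)\to\cdots$, and every $(A,A)$-cycle either gathers the robots or strictly decreases their distance by $2\delta$. Since the distance is finite, after finitely many such cycles we either reach gathering or the scheduler, at some step, fails to activate one of the two robots—but \ssynch fairness forbids any robot from being inactive for infinitely many consecutive cycles, so the scheduler cannot postpone this forever. As soon as only one robot is active in a symmetric configuration, the lights become opposite and \rrc concludes the argument.

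I do not expect any serious technical obstacle: the reasoning is a finite-state adversary analysis combined with the standard non-rigid $\delta$-progress argument. The only subtle point, and the one I would write most carefully, is the both-active $(A,A)$ sub-case, where I need to verify that the guarantees of the non-rigid model truly force a uniform $2\delta$ decrease or outright gathering, and that the positions after the move still satisfy the invariants needed for the subsequent $(B,B)$ step to be meaningful.
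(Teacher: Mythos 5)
Your proposal is correct and follows essentially the same route as the paper: reduce every asymmetric configuration to Lemma~\ref{r3}, and show that symmetric configurations either break into an asymmetric one (when exactly one robot is active) or alternate $(A,A)\leftrightarrow(B,B)$ with the distance dropping by at least $2\delta$ per midpoint move until gathering. One small remark: the \ssynch fairness condition is needed only to rule out the scheduler idling \emph{both} robots forever, not to force a cycle in which exactly one robot is active --- the scheduler may legitimately activate both robots at every cycle, and that branch is already closed by your $2\delta$-progress argument.
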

\begin{proof}
If the robots start in opposite colors, they gather by \rrc. If they start in the same color, they keep alternating colors until one robot is active and one is not. If this happens, they gather by \rrc. Otherwise, the two robots are either both active or both inactive at each cycle, and they keep computing the midpoint every other active cycle. Their distance decreases by at least $2\delta$ each time they move, until it becomes smaller than $2\delta$, and they finally gather.
\end{proof}

\subsection{Two colors for the rigid asynchronous model}

We prove that Algorithm~\ref{alg1} solves \rend in rigid \asynch as well, provided that the initial color is $A$ for both robots.

\begin{lemma}\label{r4}
If, at some time $t$, the two robots have opposite colors and neither of them is in a \compute phase that will change its color, they will eventually solve \rend.
\end{lemma}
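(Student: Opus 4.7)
My plan is to show that from time $t$ onward, the two robots keep their respective colors (say $A$ and $B$), and that this invariant forces the $B$-robot to become and remain stationary, after which the $A$-robot catches it in a single \look-\compute-\move cycle.

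First, I would prove by induction on the order of completion of \compute phases after time $t$ that every such phase leaves both lights unchanged. The base case covers the \compute phases in progress at time $t$, which by hypothesis do not change any color. For the inductive step, a \compute phase that begins after time $t$ is based on a \look taken at a time when, by inductive hypothesis, the two lights are still $A$ and $B$. Inspecting Algorithm~\ref{alg1}, one sees that the $A$-robot in the $(A,B)$ configuration keeps $me.light = A$ and sets $me.destination = other.position$, while the $B$-robot in the $(B,A)$ configuration falls through all conditions, keeping $me.light = B$ and $me.destination = me.position$.

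Next, I would use this invariant to classify the robots' moves after time $t$. The $B$-robot's \compute phases all yield $me.destination = me.position$ and trigger no actual motion; the only non-trivial \move the $B$-robot can still perform is one possibly in progress at time $t$, whose destination was set during a past $(A,A) \to B$ transition and equals a now-stale midpoint. In rigid \asynch this lingering \move completes in finite time. Let $t' \geq t$ be the time at which it ends (with $t' = t$ if no such \move exists); from $t'$ on, the $B$-robot remains at a fixed position $q$. Every \compute phase of the $A$-robot, on the other hand, sets $me.destination = other.position$, so the $A$-robot performs a sequence of \move phases pointing at the $B$-robot's location at the corresponding \look time.

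To conclude, I would pick the first \look performed by the $A$-robot at a time strictly greater than $t'$: it returns $q$ as the $B$-robot's position, the ensuing \compute sets $me.destination = q$, and, rigidity guaranteeing arrival, the subsequent \move places the $A$-robot at $q$ as well. From that moment on, both \compute phases keep yielding $me.destination = me.position$, so the two robots remain together and \rend is solved. The main obstacle I expect is precisely the careful handling of phases in progress at time $t$: ensuring that the $B$-robot's stale-midpoint \move happens at most once, and that the opposite-color invariant is preserved even while old \compute and \move phases, possibly started well before $t$, finish executing.
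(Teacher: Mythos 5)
Your proof is correct and follows essentially the same route as the paper's: the opposite-color configuration is invariant under Algorithm~\ref{alg1}, so the $B$-robot eventually halts and the $A$-robot, computing the other's position at each cycle, reaches it by rigidity. You are somewhat more careful than the paper about the phases already in progress at time $t$ (the $B$-robot's possible stale-midpoint \move), but this is a refinement of the same argument, not a different one.
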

\begin{proof}
Each robot retains its color at every cycle after time $t$, because it keeps seeing the other robot in the opposite color at every \look phase. As soon as the $A$-robot performs its first \look after time $t$, it starts chasing the other robot. On the other hand, as soon as the $B$-robot performs its first \look after time $t$, it stops forever. Eventually, the two robots will gather and never move again.
\end{proof}

\begin{theorem}\label{r2}
Algorithm~\ref{alg1} solves \rend in rigid \asynch, provided that both robots start with their lights set to $A$.
\end{theorem}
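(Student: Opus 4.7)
The plan is to reduce every execution to the setting of \rrd by a short case analysis on the timing of the first \look phases. Let $t_i$ denote the time of $r_i$'s first \look and $u_i$ the time at which $r_i$'s first \compute ends, and assume without loss of generality $t_1 \le t_2$. Since both robots start with color $A$ and have not moved or changed color by $t_1$, $r_1$ sees $(A,A)$, so at time $u_1$ its color becomes $B$ and it begins moving to the midpoint $m=(p_1+p_2)/2$.

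For Case A, suppose $t_2>u_1$. Then $r_2$'s first \look finds $r_1$ already in color $B$ while $r_2$ is still at $p_2$ with color $A$, so $r_2$ reads $(A,B)$ and the algorithm's response is to head toward the other robot's current position without changing $r_2$'s color. I will argue that from $u_2$ onwards the colors are pinned as $r_1=B$, $r_2=A$: every future \look by $r_1$ returns $(B,A)$ (do nothing) and every future \look by $r_2$ returns $(A,B)$ (keep chasing, color unchanged). Immediately after $u_2$ the configuration therefore satisfies the hypothesis of \rrd, and rendezvous follows.

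For Case B, suppose $t_2 \le u_1$. Now $r_2$'s snapshot at $t_2$ also shows $(A,A)$, because $r_1$ has neither changed color nor moved yet; both robots thus commit to becoming $B$ and moving to the very same midpoint $m$, and by rigidity each reaches $m$ exactly. The subtle point is that one robot, say $r_1$ (with first \move ending at $v_1 \le v_2$), may reach $m$ and begin another cycle before $r_2$ finishes its first \move. I will split on this: if no robot \look{}s between $v_1$ and $v_2$, then both robots sit at $m$ in color $B$, and I will verify by enumerating the four color patterns $(A,A)$, $(A,B)$, $(B,A)$, $(B,B)$ that every destination computed when both positions equal $m$ equals $m$ (or is a non-move), so the gathering is stable under any asynchronous schedule of subsequent \look{}s. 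Otherwise, the first extra \look must be $r_1$'s (since $r_2$ is mid-\move), and it necessarily reads $(B,B)$, flipping $r_1$ to $A$; at the end of that \compute we again have opposite colors with no pending color changes (in fact $r_2$ is in a \move, not a \compute), and \rrd closes the argument.

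The main obstacle is the delicate Case B: checking that the opposite-color invariant of \rrd survives the moment $r_1$ flips while $r_2$ is literally mid-\move (so that $r_2$'s in-flight destination $m$ is harmless even though it was computed in a different era), and separately that the ``both already at $m$'' situation is genuinely frozen under every asynchronous interleaving of subsequent \look{}s and \compute{}s. Once these stability checks are carried out, the proof is a clean reduction to \rrd and the rigidity of \move phases.
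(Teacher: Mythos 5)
Your overall strategy---splitting on whether $r_2$'s first \look precedes $r_1$'s first color change and reducing everything to \rrd---is essentially the paper's, and your Case A is correct. The trouble is in Case B, where the case analysis is not exhaustive and rests on two claims that are false as stated.

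First, you assert that the first \look performed in the interval $(v_1,v_2)$ ``necessarily reads $(B,B)$.'' Nothing forces $u_2\leqslant v_1$: $r_2$'s first \compute phase may still be in progress when $r_1$ has already reached $m$ and resumed cycling, and since colors change only at the very end of a \compute, $r_1$ would then read $(B,A)$ and, by the algorithm, do nothing. This sub-case is harmless---$r_1$ simply idles until $r_2$ turns $B$---but your proof never treats it; the paper handles it explicitly (``if some robot reaches $m$ and performs a \look while the other robot is still set to $A$, it waits until the other turns $B$''). Second, even when $r_1$ does read $(B,B)$, the parenthetical ``in fact $r_2$ is in a \move, not a \compute'' is unjustified: $r_1$'s \compute may last beyond $v_2$, during which $r_2$ reaches $m$, performs its second \look, reads $(B,B)$ (as $r_1$ is still set to $B$), and enters a \compute that will flip it to $A$. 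At the instant $r_1$ turns $A$, the hypothesis of \rrd is then violated. That interleaving is still safe---both robots are already at $m$, and your own ``frozen at $m$'' enumeration shows they never leave---but you invoke that stability argument only in the disjoint branch where no \look occurs in $(v_1,v_2)$, so this schedule falls through the cracks. The clean way to close Case B, which is what the paper does, is to condition not on the timing of the next \look relative to $(v_1,v_2)$ but on which robot is first to see the other set to $B$ (that robot is necessarily at $m$), and then on whether the other robot has already reached $m$ when the first one turns $A$: if not, the other is mid-\move and \rrd applies; if so, both are at $m$ and every subsequently computed destination equals $m$.
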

\begin{proof}
Let $r$ be the first robot to perform a \look. Then $r$ sees the other robot $s$ set to $A$, and hence it turns $B$ and computes the midpoint $m$. Then, as long as $s$ does not perform its first \look, $r$ stays $B$ because it keeps seeing $s$ set to $A$. Hence, if $s$ performs its first \look after $r$ has turned $B$, \rrd applies, and the robots will solve \rend.

On the other hand, if $s$ performs its first \look when $r$ is still set to $A$ (hence still in its starting location), $s$ will turn $B$ and compute the midpoint $m$, as well. If some robot reaches $m$ and performs a \look while the other robot is still set to $A$, the first robot waits until the other turns $B$. Without loss of generality, let $r$ be the first robot to perform a \look while the other robot is set to $B$. This must happen when $r$ is in $m$ and set to $B$, hence it will turn $A$ and stay in $m$. If $r$ turns $A$ before $s$ has reached $m$, then \rrd applies. Otherwise, $r$ turns $A$ when $s$ is already in $m$, and both robots will stay in $m$ forever, as they will see the other robot in $m$ at every \look.
\end{proof}

\subsection{Three colors for the non-rigid asynchronous model}

For non-rigid \asynch, we propose Algorithm~\ref{alg3}, also represented in Figure~\ref{f3}. The colors used are three, namely $A$, $B$, and $C$.

\begin{algorithm}\label{alg3}
\caption{Rendezvous for non-rigid \asynch}
\DontPrintSemicolon
$me.destination \longleftarrow me.position$\;
\If{$me.light = A$}{
	\If{$other.light = A$}{
		$me.light \longleftarrow B$\;
		$me.destination \longleftarrow (me.position+other.position)/2$\;
	}
	\ElseIf{$other.light = B$}{
		$me.destination \longleftarrow other.position$\;
	}
}
\ElseIf{$me.light = B$}{
	\If{$other.light = B$}{
		$me.light \longleftarrow C$\;
	}
	\ElseIf{$other.light = C$}{
		$me.destination \longleftarrow other.position$\;
	}
}
\Else{
	\If{$other.light = C$}{
		$me.light \longleftarrow A$\;
	}
	\ElseIf{$other.light = A$}{
		$me.destination \longleftarrow other.position$\;
	}
}
\end{algorithm}

\begin{figure}[ht]
\centering{\includegraphics[scale=1.5]{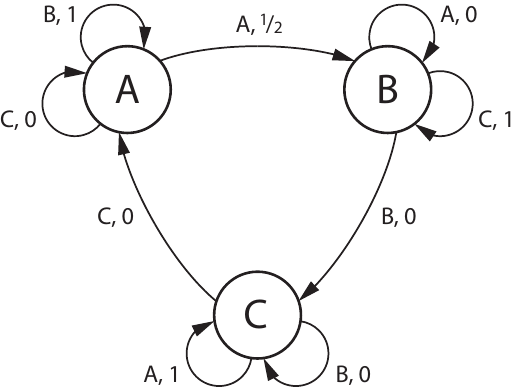}}
\caption{Illustration of Algorithm~\ref{alg3}}
\label{f3}
\end{figure}

\begin{observation}\label{r26}
A robot retains its color if and only if it sees the other robot set to a different color.
\end{observation}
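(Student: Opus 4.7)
The plan is to verify the observation by straightforward case analysis on the nine ordered pairs $(me.light, other.light) \in \{A,B,C\}^2$, grouping them into the three \emph{same-color} cases and the six \emph{different-color} cases, and checking in each case whether the pseudocode of Algorithm~\ref{alg3} assigns to $me.light$.

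For the three same-color cases, I would read off directly from the algorithm that seeing $A$ while set to $A$ causes the assignment $me.light \longleftarrow B$, seeing $B$ while set to $B$ causes $me.light \longleftarrow C$, and seeing $C$ while set to $C$ causes $me.light \longleftarrow A$; in each instance the new color differs from the old one, so the robot genuinely changes its color. This establishes the ``only if'' direction: whenever the other robot is seen in the same color, the current robot does not retain its color.

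For the six different-color cases, I would observe that each of the three branches ``$me.light = A$'', ``$me.light = B$'', ``$me.light = C$'' contains an assignment to $me.light$ only inside the sub-branch triggered by the matching $other.light$, while the two remaining sub-branches merely update $me.destination$ (or leave both $me.light$ and $me.destination$ at their initial values). Hence, whenever $me.light \neq other.light$, the variable $me.light$ is never reassigned during the \compute phase, and by the convention stated at the end of Section~\ref{s1} (a light's color can change only at the very end of \compute), the robot's color is retained. This gives the ``if'' direction.

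There is no real obstacle here: the statement is essentially a syntactic property of the pseudocode, and the argument amounts to reading the nested conditionals and noticing the symmetric cyclic pattern $A \to B \to C \to A$ that governs the color updates. The only thing to be careful about is to rely explicitly on the convention that colors change only at the end of \compute, so that intermediate reads of $me.light$ during a single cycle cannot cause spurious changes.
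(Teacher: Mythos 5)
Your case analysis is correct and matches the paper, which states this as an Observation with no written proof precisely because it follows by direct inspection of the nine color pairs in Algorithm~\ref{alg3} — exactly the check you carry out. Nothing is missing; your explicit appeal to the convention that colors change only at the end of a \compute phase is a reasonable extra precaution but not a point of divergence.
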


\begin{lemma}\label{r25}
If, at some time $t$, the two robots are set to different colors, and neither of them is in a \compute phase that will change its color, they will eventually solve \rend.
\end{lemma}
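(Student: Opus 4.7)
The plan is to exploit \rrz to show that the color asymmetry is preserved indefinitely, and then argue that one robot must be in a \qq{chasing} state while the other is in a \qq{waiting} state, leading to a pursuit that terminates in rendezvous.

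First I would inspect Algorithm~\ref{alg3} on the six ordered color pairs with distinct entries and record a simple structural fact: on $(A,B)$, $(B,C)$, $(C,A)$ the acting robot moves to \texttt{other.position} without changing color, while on $(A,C)$, $(B,A)$, $(C,B)$ the acting robot neither moves nor changes color. Thus in any different-color configuration, one robot (the \emph{chaser}) is programmed to move onto the other, and the other (the \emph{waiter}) is programmed to stay put; and by \rrz, both colors are preserved at every \compute subsequent to time $t$. Combined with the hypothesis that any ongoing \compute preserves colors, this implies that from time $t$ on, the two colors remain fixed, the chaser role and waiter role remain fixed, and every \compute performed after $t$ respects those roles.

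Next I would argue that the waiter $w$ eventually becomes stationary. Any \move phase active at time $t$ was prepared by an earlier \compute; after that \move completes (and at most one such \move can be pending), every \compute of $w$ strictly after time $t$ sets \texttt{me.destination} to $w$'s own position, so $w$ never moves again. Let $t'\ge t$ be a time after which $w$ is stationary at some point $p$. Then at every \compute performed by the chaser $c$ after $t'$, $c$ sees $w$ at $p$ in the appropriate color and computes destination $p$. By the non-rigid guarantee, each such \move brings $c$ at least $\delta$ closer to $p$, and if $c$ is within $\delta$ of $p$ it reaches $p$ exactly. Since $c$ performs infinitely many complete cycles, after finitely many of them $c$ reaches $p$.

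Once $c$ is located at $p$, every subsequent \compute of $c$ still respects its role, sets \texttt{me.destination} equal to $p = $ \texttt{me.position}, and hence $c$ does not leave $p$; similarly $w$ remains at $p$. Hence the two robots gather at $p$ and stay there forever, as required.

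The main obstacle I foresee is the asynchronous bookkeeping: a chaser might have been looking before $w$ became stationary and could carry a stale destination across $t'$, so one has to observe that at most one such pending \move exists per robot and that it is harmless because colors, and therefore roles, are already locked in. Everything else is a routine consequence of \rrz and the $\delta$-progress guarantee.
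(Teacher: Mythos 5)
Your proof is correct and follows essentially the same route as the paper's: invoke \rrz to lock the colors, observe that the algorithm then makes one robot a stationary waiter and the other a chaser, and conclude via the $\delta$-progress guarantee. The paper's version is just terser; your extra bookkeeping about pending \move phases and the explicit six-case check of the transition table are sound elaborations of the same argument.
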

\begin{proof}
The two robots keep seeing each other set to different colors, and hence they never change color, by \rrz. One of the two robots will eventually stay still, and the other robot will then approach it by at least $\delta$ at every \move phase, until their distance is less than $\delta$, and they gather. As soon as they have gathered, they will stay in place forever.
\end{proof}

\begin{theorem}\label{r24}
Algorithm~\ref{alg3} solves \rend in non-rigid \asynch, regardless of the colors in the initial configuration.
\end{theorem}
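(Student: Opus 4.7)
The plan is to show that every execution of Algorithm~\ref{alg3} eventually reaches a state where \rry applies, after which \rend follows. By \rrz, a robot changes color only when it sees the other robot set to the same color, so as long as no in-progress \compute is about to upset this, once the two lights are distinct they stay distinct. The task therefore reduces to analyzing what happens when the two robots start with, or transiently drift into, a monochromatic light configuration.

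Before the case analysis, I would verify two preliminaries. First, once the two robots occupy a common location they remain there forever: with both set to $A$ the computed midpoint equals the shared point; with both set to $B$ or both set to $C$ only a color change occurs; and in every mixed configuration the chaser's destination coincides with the waiter's position, which is again the shared point. Second, any initial mixed color configuration is dispatched immediately by \rry, because the robots start at rest with no pending \compute phases.

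The remaining work handles initial monochromatic configurations, and proceeds by case analysis on the outcome of the next round of \look--\compute events. If the adversary keeps the two robots sufficiently synchronous that each one's \look sees the other still in the old common color, both perform the same transition and remain monochromatic, cycling through $A\to B\to C\to A$. Only the ``both-$A$'' transition involves motion; in that step each robot is guaranteed to walk at least $\delta$ toward the shared midpoint, so the distance between them shrinks by at least $2\delta$ per such step, and after finitely many full $A\to B\to C$ cycles the robots gather. If instead some robot's \look occurs strictly after its counterpart has completed its color change, the two robots end up in distinct colors. A direct check of the transitions shows that this pair is always of the form (newly-advanced color, old color) in the cyclic order $A\to B\to C\to A$, so the newly colored robot waits while the other chases, and neither robot sees its own color; by \rrz no further color change can be triggered, and \rry concludes the argument.

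The main obstacle is juggling the asynchronous and non-rigid interleaving during the ``both-$A$'' transition: a robot may \look before, during, or after its counterpart's color change, and either robot may stop short of the midpoint in the \move phase. The case analysis must verify that, in every such interleaving, the resulting state is one of (i) a stable mixed pair, (ii) a monochromatic pair at strictly smaller distance, or (iii) the gathered state---and, in particular, that no pending \compute can restore a monochromatic configuration without a corresponding decrease in distance.
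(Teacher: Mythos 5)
Your proposal is correct and follows essentially the same route as the paper's proof: reduce every mixed-color configuration to \rry (using \rrz for color stability), and show that monochromatic configurations either desynchronize into a stable (waiter, chaser) pair or cycle synchronously through $A\to B\to C\to A$ with the distance dropping by at least $2\delta$ per full cycle. The interleaving checks you defer at the end are exactly the case analysis the paper carries out explicitly.
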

\begin{proof}
If the robots start the execution at different colors, they solve \rend by \rry.

If they both start in $A$, then let $r$ be the first robot to perform a \look. $r$ plans to turn $B$ and move to the midpoint. If it turns $B$ before the other robot $s$ has performed a \look, then \rry applies.

Otherwise, $s$ plans to turn $B$ and move to the midpoint, as well. If a robot stops and sees the other robot still set to $A$, it waits. 
Without loss of generality, let $r$ be the first robot to perform a \look and see the other robot set to $B$. $r$ now plans to turn $C$, but if it does so before $s$ has performed a \look, \rry applies.

So, let us assume that both robots have seen each other in $B$ and they both plan to turn $C$. Once again, if a robot turns $C$ and sees the other robot still in $B$, it waits. Without loss of generality, let $r$ be the first robot to see the other robot in $C$. $r$ plans to turn $A$, but if it does so before $s$ has performed a \look, \rry applies.

Assume that both robots see each other in $C$ and they both plan to turn $A$. If a robot turns $A$ and sees the other robot still in $C$, it waits. At some point, both robots are in $A$ again, in a \wait phase, but they have approached each other. They both moved toward the midpoint in their first cycle, and then they just made null moves. As a consequence, if their distance was smaller than $2\delta$, they have gathered. Otherwise, the distance has decreased by at least $2\delta$. As the execution goes on and the same pattern of transitions repeats, the distance keeps decreasing until the robots gather. As soon as they have gathered, they never move again, hence \rend is solved.

The cases in which the robots start both in $B$ or both in $C$ are resolved with the same reasoning. Note that all the states with both robots set to the same color and in a \wait phase have been reached in the analysis above.
\end{proof}

\section{Termination detection}\label{s3}

Suppose we wanted our robots to acknowledge that they have gathered, in order to turn off, or ``switch gears'' and start performing a new task.

\begin{observation}
If the model is \ssynch, termination detection is trivially obtained by checking at each cycle if the robots' locations coincide.
\end{observation}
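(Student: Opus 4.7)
The plan is to turn this into a short verification that augmenting any \rend algorithm with a \qq{coincidence check} yields termination detection in \ssynch. The augmentation is: in every \compute phase, if the snapshot produced by the preceding \look shows the two robots occupying the same point, then the robot overrides the base algorithm, sets its destination to its current position, leaves its light unchanged, and flags itself as terminated; otherwise it behaves as the underlying rendezvous algorithm prescribes. I would prove correctness by showing two things: (i) whenever a robot flags itself as terminated, the two robots are truly co-located at that moment, and (ii) both robots eventually flag themselves as terminated, after which neither moves again.

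For (i), the essential \ssynch property is that \look, \compute, and \move phases of the active robots are synchronized, and in particular no robot is ever in the middle of a \move while another robot performs a \look. Hence the snapshot obtained in a \look phase faithfully reflects the current positions of both robots as resting configurations: the inactive robot is by definition not moving, and the active robot itself has just finished (or not yet begun) its \move phase. Therefore, the predicate \qq{$me.position = other.position$} evaluated on the snapshot is true if and only if the two robots actually coincide in the plane at that instant.

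For (ii), once rendezvous is reached at some cycle $k$ by the underlying algorithm (which exists by \rra, \rrb, or \rrx depending on the sub-model), neither robot will be displaced by the augmented algorithm on any subsequent cycle: at the next \look, each robot either is already flagged and does nothing, or observes coincidence, flags itself, and stays put. Because \ssynch requires every robot to be active infinitely often, each robot performs a \look after cycle $k$ and therefore flags itself within finitely many cycles. From that moment on, both robots remain at the common gathering point forever, so termination is correctly detected.

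The only mild subtlety, and the step worth stating explicitly rather than grinding, is the contrast with \asynch that motivates Section~\ref{s3}: in \asynch a \look may occur while the other robot is mid-move and happens to be transiting through the observer's location, so the snapshot-based check no longer certifies a stable rendezvous. This is precisely why termination detection in \asynch will require a genuinely new algorithm, not just a coincidence check tacked onto Algorithm~\ref{alg3}.
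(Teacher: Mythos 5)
Your proof is correct and is precisely the argument the paper intends: the Observation is stated without proof as trivial, and your two-part verification (snapshots in \ssynch always show resting positions, so a coincidence check is sound; fairness of activation plus the underlying rendezvous guarantee make it complete) fills in exactly that omitted reasoning. Your closing remark about why the same check fails in \asynch also matches the paper's own motivating discussion immediately after the Observation.
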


Unfortunately, in \asynch, correct termination detection is harder to obtain. Observe that both Algorithm~\ref{alg1} (for rigid \asynch) and Algorithm~\ref{alg3} (for non-rigid \asynch) fail to guarantee termination detection. Indeed, suppose that robot $r$ is set to $A$ and sees the other robot $s$ set to $B$, and that the two robots coincide. Then $r$ cannot tell if $s$ is still moving or not. If $s$ is not moving, it is safe for $r$ to terminate, but if $s$ is moving, then $r$ has still to ``chase'' $s$, and cannot terminate yet.

To guarantee correct termination detection in non-rigid \asynch, we propose Algorithm~\ref{alg2}, also represented in Figure~\ref{f2}. Note that different rules may apply depending on the distance between the two robots, indicated by $d$ in the picture. However, robots need only distinguish between zero and non-zero distances. The colors used are again three, namely $A$, $B$, and $C$.

\begin{algorithm}\label{alg2}
\caption{Rendezvous for non-rigid \asynch with termination}
\DontPrintSemicolon
$me.destination \longleftarrow me.position$\;
\If{$me.light = A$}{
	\If{$other.light = A$}{
		\If{$other.position \neq me.position$}{
			$me.light \longleftarrow B$\;
			$me.destination \longleftarrow (other.position+me.position)/2$\;
		}
		\Else{
			$me.light \longleftarrow C$\;
		}
	}
	\ElseIf{$other.light = B$}{
		$me.destination \longleftarrow other.position$\;
	}
	\ElseIf{$other.position \neq me.position$}{
		$me.destination \longleftarrow other.position$\;
	}
	\Else{
		$me.light \longleftarrow C$\;
	}
}
\ElseIf{$me.light = B$}{
	\If{$other.light = A$ \textbf{\emph{and}} $other.position = me.position$}{
		$me.light \longleftarrow C$\;
	}
	\ElseIf{$other.light = B$}{
		$me.light \longleftarrow A$\;
	}
	\ElseIf{$other.position \neq me.position$}{
		$me.destination \longleftarrow other.position$\;
	}
	\Else{
		$me.light \longleftarrow C$\;
	}
}
\ElseIf{$other.light = C$}{
	\If{$other.position \neq me.position$}{
		$me.light \longleftarrow A$\;
	}
	\Else{
		terminate\;
	}
}
\end{algorithm}

\begin{figure}[ht]
\centering{\includegraphics[scale=1.5]{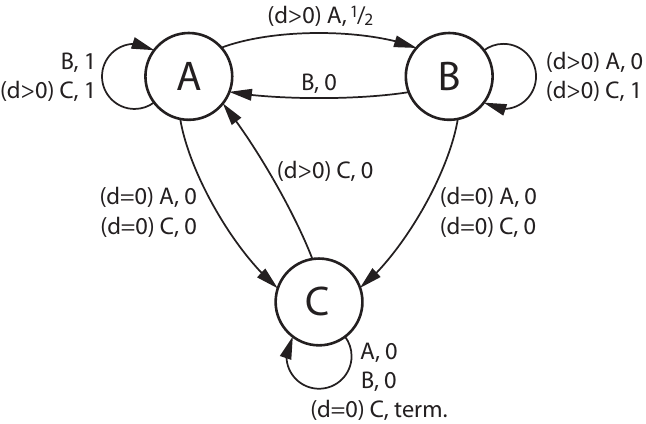}}
\caption{Illustration of Algorithm~\ref{alg2}}
\label{f2}
\end{figure}

\begin{observation}\label{r7}
No robot can move while it is set to $C$.
\end{observation}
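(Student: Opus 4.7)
The proof is a direct case analysis on Algorithm~\ref{alg2}. The plan is to argue that whenever a robot ends a \compute phase with its light set to $C$, its computed destination coincides with its current position, so that the following \move phase is trivial. Since the convention is that lights only change at the very end of a \compute phase, this covers every moment at which the light is $C$: either the robot just turned $C$ during the current \compute (and will not move during the subsequent \move), or the robot was already $C$ entering the \compute (and we need to check that $C$-branches never overwrite \texttt{me.destination}).

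First I would list the branches in which the light becomes $C$. These are exactly: (i) $me.light = A$ and $other.light = A$ with coinciding positions; (ii) $me.light = A$, $other.light = C$, coinciding positions; (iii) $me.light = B$, $other.light = A$, coinciding positions; (iv) $me.light = B$ with the final \texttt{Else} (which is reached only when $other.light = C$ and positions coincide). In each of these four branches, only the light assignment is executed, and the destination retains the default value \texttt{me.position} set on the first line of the algorithm; hence the robot does not move.

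Next I would handle the case where $me.light = C$ at the start of the \compute. Inspection shows that the only branch executed in this case is the outer \texttt{ElseIf} testing $other.light = C$, which either changes the light back to $A$ (without touching \texttt{me.destination}), or triggers termination, or falls through entirely (when $other.light \in \{A,B\}$). None of these sub-branches ever modifies \texttt{me.destination}, so once again the destination remains \texttt{me.position} and the robot does not move.

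There is really no obstacle here beyond being exhaustive with the case split; the observation is an immediate syntactic property of Algorithm~\ref{alg2}. The only thing to be careful about is the timing convention recalled in Section~\ref{s1}, namely that the light color is updated only at the very end of a \compute phase, which guarantees that the \move phase executed \emph{after} a transition into color $C$ inherits the (null) destination computed in that same \compute, so indeed the robot is motionless for the entire interval during which its light reads $C$.
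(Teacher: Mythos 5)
Your case analysis is correct and complete: every branch of Algorithm~\ref{alg2} that sets the light to $C$, and every branch reachable when the light is already $C$, leaves \texttt{me.destination} at its default value \texttt{me.position}. The paper states this as an unproved observation to be read off the algorithm, and your exhaustive inspection is exactly the intended justification.
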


\begin{lemma}\label{r6}
If some robot ever turns $C$ from a different color, the two robots will gather and their execution will terminate correctly.
\end{lemma}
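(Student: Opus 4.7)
The plan is to track the two robots from the moment that $r$, the first robot that ever turns its light to $C$ from a different color, commits to this transition. Let $t$ be the time of that transition, let $p$ be the position of $r$ at time $t$, and let $t_0\leq t$ be the time at which $r$ last performs a \look before $t$. A first inspection of Algorithm~\ref{alg2} shows that every branch that sets $me.light \longleftarrow C$ leaves $me.destination$ equal to $me.position$; hence $r$ does not move during the cycle ending at $t$, and by \rrg it never moves at any later time at which it is still set to $C$. Therefore $r$ stays at $p$ throughout $[t_0,+\infty)$, provided we can also exclude that $r$ later leaves the color $C$.

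A second inspection shows that every branch setting $me.light \longleftarrow C$ is guarded by the predicate $other.position = me.position$; hence at $t_0$ the other robot $s$ is also at $p$. The next step is to analyze the behavior of $s$, exploiting the fact that in every \look phase of $s$ occurring after $t$ it sees $r$ fixed at $p$ with color $C$. Reading off the branches of Algorithm~\ref{alg2} with $other.light = C$, each such \look either (i) turns $s$ to $C$ in place, if $s$ has color $A$ or $B$ and is located at $p$; (ii) commands $s$ to move to $r$'s position, reducing the distance between $s$ and $p$ by at least $\delta$, if $s$ has color $A$ or $B$ and is not at $p$; (iii) turns $s$ from $C$ to $A$ in place, if $s$ is $C$ but not at $p$ (after which (ii) applies in the next cycle); or (iv) triggers termination, if $s$ is $C$ at $p$. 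Any \compute or \move phase of $s$ initiated before $t$ completes in a bounded number of cycles, so from some finite time on every \look of $s$ falls under~(i)--(iv); this forces $s$ to reach $p$, turn $C$, and terminate in finitely many cycles.

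The main obstacle is showing that $r$ itself cannot abandon $C$ in the meantime, since the $me=C$ branch of the algorithm flips $r$ back to $A$ whenever it sees $other=C$ at a different position. I would handle this through the invariant that, for every time $\tau\geq t$, every robot whose light is $C$ is located at $p$. For $s$, any transition to $C$ occurring after $t_0$ happens (by the coincidence guard noted above) while $s$ is coincident with $r$, hence at $p$; and if $s$ already carries the \emph{initial} color $C$, it cannot have moved, so the coincidence of $r$ and $s$ at $t_0$ forces $s$'s initial position to equal $p$. Granted this invariant, every \look of $r$ after $t$ sees either $other\in\{A,B\}$ (in which case $r$ does nothing) or $other=C$ coincident (in which case $r$ terminates). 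The subtlest case to verify is when $s$ starts as $C$ at a position $q$ different from $r$'s starting position: here I would check that $s$ never moves and does nothing whenever it sees $r$ in a non-$C$ color, so that the algorithm's dynamics force $r$ to arrive at $q$ before any $C$-transition of its own, making $p=q$.
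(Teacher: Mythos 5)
Your proof is correct and follows essentially the same approach as the paper's: both rest on the two observations that a robot can only turn $C$ while coincident with the other robot and that a robot set to $C$ never moves, so that $r$ is pinned at $p$ and can never witness $s$ set to $C$ at a different location, after which $s$ is attracted to $p$, turns $C$ there, and both terminate. The paper packages your invariant as a first-violation contradiction (the first time $t'$ at which $r$ would see $s$ in $C$ elsewhere), but the content is identical.
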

\begin{proof}
A robot can turn $C$ only if it performs a \look while the other robot is in the same location. If robot $r$ performs a \look at time $t$ that makes it turn $C$, then $r$ stays $C$ forever after, unless it sees the other robot $s$ set to $C$ as well, in a different location. Let $t'>t$ be the first time this happens. Due to \rrg, $r$ does not move between $t$ and $t'$. On the other hand, $s$ coincides with $r$ at time $t$. Then $s$ must turn some other color and move away from $r$, and then turn $C$ at some time $t''$ such that $t<t''\leqslant t'$. But, in order to turn $C$, $s$ would have to coincide with $r$, which is a contradiction.

Hence $r$ will stay $C$ and never move after time $t$. As soon as $s$ sees $r$ set to $C$, it starts moving toward it (after turning $A$, if $s$ is also set to $C$ and not coincident with $r$), covering at least $\delta$ at each \move phase, until their distance becomes less than $\delta$ and $s$ finally reaches $r$. Then $s$ will turn $C$ as well, and both robots will terminate correctly after seeing each other again.
\end{proof}

\begin{lemma}\label{r5}
If, at some time $t$, the two robots are set to $A$ and $B$ respectively, and neither of them is in a \compute phase that will change its color, they will eventually gather and terminate correctly.
\end{lemma}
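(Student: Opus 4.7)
The plan is to reduce to \rrf: once the $B$-robot $s$ changes its light to $C$, the rest of the execution is handled by that lemma, so the crux is to show that $s$ performs a \look{} while coincident with the $A$-robot $r$ in finite time.

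I would first verify by inspecting Algorithm~\ref{alg2} that, while the configuration remains $(A, B)$, no color change other than $s \to C$ via ``$B$ sees $A$ at the same position'' can occur: an $A$-robot only changes color when it sees $A$ or $C$, neither of which arises here; a $B$-robot changes color when it sees $B$, when it sees $C$ at the same position, or when it sees $A$ at the same position, and only the last is possible. So until $s$ turns $C$, the colors are locked to $(A, B)$, and the hypothesis about ongoing \compute{} phases ensures this state is indeed entered.

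Next I would analyze the motion. At every \look, the chasing robot $r$ sets its destination to $s$'s observed position, so $r$ approaches $s$ by at least $\delta$ per \move{} (or reaches it if within $\delta$). At the same time, the $B$-branch of the algorithm ensures that $s$ does not translate away from its current position while seeing $r$ at $A$ (in particular, the only non-color-changing line in the $B$-branch sets the destination to $r$'s observed position, which cannot increase the distance). Since every phase has finite length and robots execute infinitely many cycles, after finitely many \move{}s of $r$ the two robots become coincident.

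Once $r$ is at $s$'s position, each subsequent \look{} of $r$ computes $me.destination = me.position$, so $r$ stays put while still set to $A$. By the finiteness of phases, $s$'s next \look{} occurs in bounded time; at that \look{} $s$ sees $A$ at the same position and changes its light to $C$, and \rrf then closes the argument. The main obstacle is the motion analysis: one must carefully verify that under the asynchronous adversary no subtle interleaving of phases lets $s$ drift away from $r$ while the chase is still in progress, so that the distance $r$-to-$s$ is indeed nonincreasing and diminishes by at least $\delta$ per \move{} of $r$ until coincidence is reached at an actual \look{} of $s$.
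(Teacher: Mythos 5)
Your overall plan coincides with the paper's: lock the colors at $(A,B)$, run a pursuit until the robots coincide, let $s$ turn $C$ at its next \look, and hand off to Lemma~\ref{r6}; your case analysis of the possible color transitions from $(A,B)$ is also correct. The gap is in the motion analysis, which you flag as ``the main obstacle'' and then close with a false claim. You read the $B$-branch as sending $s$ to $r$'s observed position and assert that this ``cannot increase the distance'' and that the distance is nonincreasing and drops by $\delta$ per \move phase of $r$. If both robots really chase each other's observed positions, this fails: let both robots \look simultaneously and then move rigidly; each reaches the other's old position, so they swap, and the distance seen at every \look is the same forever (fully synchronous rigid executions are a special case of non-rigid \asynch, so the adversary may schedule exactly this). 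More locally, $s$ moving to a \emph{stale} observation of $r$ can overshoot $r$'s current position and end the phase farther from $r$ than it started. A genuine two-sided pursuit therefore does not gather, and your argument has no decreasing quantity.

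The paper's proof instead rests on the $B$-robot performing a \emph{null} move when it sees $A$ at positive distance: ``the $B$-robot will eventually stay still, and the $A$-robot will then approach it by at least $\delta$ at every \move phase.'' That is the intended transition table of Algorithm~\ref{alg2}: the third \emph{else-if} of the $B$-branch is meant to fire only when the other robot is seen in $C$; read literally, the pseudocode's chain of conditionals does admit your interpretation, under which the lemma would actually be false. With a stationary target the pursuit is one-sided, convergence is immediate, and the remainder of your argument (coincidence is eventually observed at a \look of $s$, $s$ turns $C$, Lemma~\ref{r6} finishes) matches the paper.
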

\begin{proof}
If some robot ever turns $C$ after time $t$, gathering and termination are ensured by \rrf. Otherwise, the two robots keep seeing each other set to opposite colors, and hence they never change color. The $B$-robot will eventually stay still, and the $A$-robot will then approach it by at least $\delta$ at every \move phase, until their distance is less than $\delta$, and they gather. The $B$-robot then turns $C$, and \rrf applies again.
\end{proof}

Let $r(t)$ denote the position of robot $r$ at time $t\geqslant 0$.

\begin{lemma}\label{r8}
Let $t$ be a time instant at which both robots are set to $A$, and neither of them is in a \compute phase. Let us assume that robot $r$ will stay still until the end of its current phase (even if it is a \move phase), and that robot $s$ will either stay still until the end of its current phase, or its destination point is $r$'s current location. Then $r$ and $s$ will eventually gather and terminate correctly.
\end{lemma}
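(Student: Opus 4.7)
The plan is to split into two cases according to whether $r(t)$ and $s(t)$ coincide, and in each case reduce to an application of \rrf or \rre.

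\emph{Coincident case.} By hypothesis $r$ stays at $r(t)$, and so does $s$: even if $s$'s current phase is a \move to $r(t)$, its destination coincides with its own position and the motion is null. Both robots therefore remain coincident at $r(t)$ until one of them performs its next \look, at which point Algorithm~\ref{alg2} (branch $me = A$, $other = A$, same position) turns the active robot to colour $C$, and \rrf yields gathering and termination.

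\emph{Non-coincident case.} I would consider the first \compute that finishes after time $t$, occurring at time $t_q$, and call its owner $q$ and the other robot $q'$. Throughout $[t, t_q]$ no \compute has finished, so $q'$ is still coloured $A$; in particular at $q$'s preceding \look, $q$ sees $q'$ with colour $A$. If the robots coincide at that \look, then $q$ turns $C$ at $t_q$ and \rrf applies. Otherwise $q$ sees an $A$-robot at a different position and turns $B$, planning a \move to the new midpoint. The configuration just after $t_q$ is $q = B$, $q' = A$.

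To conclude in this subcase I would invoke \rre at time $t_q$. Only the hypothesis on $q'$ requires work: if $q'$ has not yet started its own \compute by $t_q$, the hypothesis is immediate and \rre finishes the job. If instead $q'$ is in a concurrent \compute, then $q'$'s \look preceded $t_q$ at a moment when $q$ was still coloured $A$; hence $q'$'s pending colour change can only be to $C$ (in which case \rrf applies at once, by the symmetric coincident-at-$q'$'s-\look argument) or to $B$, producing a transient $B$-$B$ configuration. From $B$-$B$, each robot's next \compute returns it to colour $A$ with no motion, yielding a fresh $A$-$A$ configuration whose between-robot distance has strictly decreased by at least $2\delta$, since each robot moved at least $\delta$ toward its computed midpoint during its $B$-phase \move.

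The main obstacle is this last subcase, where one has to iterate the argument on the shrunken $A$-$A$ configuration: a distance-decrement argument based on the guaranteed $\delta$-step of each \move ensures that after finitely many passes through the $A$-$A \to B$-$B \to A$-$A$ loop the distance drops below $2\delta$, a midpoint \move is then guaranteed to complete, the robots coincide, and the coincident case of the first paragraph triggers \rrf.
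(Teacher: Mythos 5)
Your overall strategy (unfolding the execution one \look at a time, dispatching the single-$B$ and coincidence cases to \rre and \rrf, and closing with a distance-decrement argument) is essentially the paper's, but two steps do not hold as written.

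First, the claim that \qq{from $B$-$B$, each robot's next \compute returns it to colour $A$} is false in \asynch. After both robots have turned $B$, one of them, say $r$, may perform its next \look and turn $A$ \emph{before} the other performs its next \look; when $s$ finally looks, it sees $r$ in $A$ at a different position and, by Algorithm~\ref{alg2}, it \emph{keeps} colour $B$ and chases $r$. The loop therefore need not return to $A$-$A$: one lands in an $A$-$B$ configuration, which must be dispatched by \rre at the instant $r$ turns $A$ (this is exactly the branch \qq{if $s$ performs the second \look after $r$ has already turned $A$} in the paper's proof). Your case analysis omits this branch entirely.

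Second, your quantitative step --- the distance drops by at least $2\delta$ per pass because \qq{each robot moved at least $\delta$ toward its computed midpoint} --- is not justified in the asymmetric situation that the lemma's hypothesis explicitly allows, namely when $s$ is already in a \move phase whose destination is $r$'s current location. There the two robots take their snapshots at different times and compute \emph{different} midpoints $m_r\neq m_s$, and the two individual $\delta$-advances do not combine into a $2\delta$ reduction of the inter-robot distance. The paper copes with this by measuring progress with a potential $d$ (the distance from $s$'s last \look position to $r(t)$ rather than the current separation), proving only $d'\leqslant\max\{d-\delta,\,d/2\}$, and running a well-founded induction on $k=\lceil d/\delta\rceil$. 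Your proof never engages with the \qq{destination is $r$'s location} alternative, and while any fixed per-round decrement would suffice for termination, the $2\delta$ one you assert is not the one you can prove in that case; the first (asymmetric) round needs its own estimate before the symmetric iteration you describe can take over.
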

\begin{proof}
If $s$ is not directed toward $r$ at time $t$, let $d$ be the distance between $r(t)$ and $s(t)$. Otherwise, let $t'$ be the time at which $s$ performed its last \look, and let $d$ be the distance between $s(t')$ and $r(t)$. Furthermore, let $k=\lceil d/\delta\rceil$. We will prove our claim by well-founded induction on $k$, so let us assume our claim to hold for every $k'$ such that $0\leqslant k'<k$.

The first robot to perform a \look after time $t$ sees the other robot set to $A$. If they coincide (i.e., if $s$ has reached $r$ or if $k=0$), the first robot turns $C$, and \rrf applies. If they do not coincide, the first robot turns $B$. If it turns $B$ before the other robot has performed a \look, then \rre applies. Otherwise, when the second robot performs its first \look after time $t$, it sees the first robot still set to $A$. Once again, if they coincide, the second robot turns $C$ and \rrf applies. At this point, if $k=1$ and $s$ was directed toward $r$ at time $t$, the robots have gathered and terminated correctly.

Hence, if $r$ and $s$ perform their first \look at times $t_r$ and $t_s$ respectively, we may assume that both will turn $B$, $r$ computes the midpoint $m_r$ of $r(t_r)$ and $s(t_r)$, and $s$ computes the midpoint $m_s$ of $r(t_s)$ and $s(t_s)$. Observe that, if $s$'s destination was not $r$ at time $t$, then $m_r=m_s$.

Without loss of generality, let $r$ be the first robot to perform the second \look. $r$ sees $s$ set to $B$, hence it turns $A$. If $s$ performs the second \look after $r$ has already turned $A$, then $s$ necessarily sees $r$ in $A$ (because $r$ keeps seeing $s$ in $B$), and \rre applies.

Otherwise, both robots see each other in $B$, and both eventually turn $A$. Without loss of generality, let $s$ be the first robot to perform the third \look. If $k=1$ and $s$ was not directed toward $r$ at time $t$, the robots have indeed gathered in $m_r=m_s$, so $s$ turns $C$ and \rrf applies.

At this point we may assume that $k\geqslant 2$, hence $\delta\leqslant (k-1)\delta<d\leqslant k\delta$. We claim that the distance $d'$ between $r$ and $s$ is now at most $(k-1)\delta$. Indeed, if $s$ was not directed toward $r$ at time $t$, then each robot has either reached $m_r=m_s$, or has approached it by at least $\delta$. In any case, $d'\leqslant d-\delta \leqslant (k-1)\delta$. Otherwise, if $s$ was directed toward $r$ at time $t$, then observe that both $m_r$ and $m_s$ lie between $r(t)$ and $m=(r(t)+s(t))/2$. Moreover, $s$ has performed its first \look while at distance at most $d-\delta$ from $r(t)$, and subsequently it has further approached $r(t)$. On the other hand, $r$ is found between $r(t)$ and $m$, thus at distance not greater than $d/2$ from $r(t)$. Hence, $d'\leqslant \max\{d-\delta, d/2\}\leqslant (k-1)\delta$.

Now, $s$ is the first robot to perform the third \look, and sees $r$ either already in $A$ or still in $B$. In the first case, the inductive hypothesis applies, because $r$ is not in a \compute phase, and its destination is $r$ itself. In the second case, $s$ computes $r$'s location, and it keeps doing so until $r$ turns $A$. When this happens, the inductive hypothesis applies again.
\end{proof}

\begin{corollary}\label{r10}
If, at some time $t$, both robots are set to $B$ and are both in a \wait or in a \look phase, they will eventually gather and terminate correctly.
\end{corollary}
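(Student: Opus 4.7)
The idea is to reduce the initial configuration to one already covered by \rre, \rrf, or \rrh. I first observe that, while both robots are set to $B$, the $B$-branch of Algorithm~\ref{alg2} only schedules the color change $B\to A$, without any displacement, whenever the other robot is seen in $B$. Consequently, neither robot moves between time $t$ and the first color-change event after $t$. Let $\tau$ denote that event; it exists because every \wait, \compute, and \move phase has finite duration, so both robots eventually perform a \look that triggers the $B\to A$ transition. Just before $\tau$, both robots are still set to $B$, so the transition at $\tau$ is necessarily $B\to A$. Call $r$ the robot making the transition and $s$ the other, which at time $\tau$ is still $B$ and at its time-$t$ position.

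\emph{Case 1: $s$ has not yet performed a \look after time $t$.} Then $r$ keeps color $A$ forever after $\tau$, because every subsequent \look reveals $s$ still in $B$, and the $A$-sees-$B$ branch only schedules a move toward $s$ without changing $r$'s color. When $s$ eventually \looks, it sees $r$ in $A$; if $r$ and $s$ happen to be coincident, then $s$ turns $C$ and \rrf concludes the argument. Otherwise $s$ retains color $B$ and schedules a move toward $r$. As soon as $s$'s \compute ends, we are in a configuration where the two colors are $A$ and $B$ and neither robot is in a \compute phase that will change its color, and \rre applies.

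\emph{Case 2: $s$ is already in a \compute phase at time $\tau$.} Since $s$'s preceding \look saw $r$ in $B$, this \compute must be planning the transition $B\to A$. Let $\tau'>\tau$ denote its end. Throughout $(\tau,\tau')$, $r$ keeps color $A$ for the same reason as in Case 1. If $r$ performs no \look in $(\tau,\tau')$, then at time $\tau'$ both robots are at their time-$t$ positions, both in color $A$, neither in a \compute phase, and with no movement scheduled; \rrh then applies with both robots remaining still. If instead $r$ performs at least one \look in $(\tau,\tau')$, each such \look sees $s$ still in $B$ and at its time-$t$ location, so $r$ schedules a move toward that location, which coincides with $s$'s position from $\tau'$ onward. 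Hence, as soon as both robots are out of a \compute phase after $\tau'$, the premises of \rrh are met with $s$ in the role of the still robot and $r$ as the one whose destination is $s$'s current position.

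The main obstacle, and the reason the argument branches, is the asynchronous interleaving around the two $B\to A$ transitions: depending on whether $s$ has already entered a \look-\compute cycle before $r$'s transition, and on whether $r$ manages to perform an intermediate \look before $s$'s transition completes, one lands either in an $A$-$B$ configuration handled by \rre (with \rrf as a safety net in case of premature coincidence), or in an $A$-$A$ configuration handled by \rrh either with both robots still or with one robot heading toward the other.
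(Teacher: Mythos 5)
Your overall strategy differs from the paper's: the paper disposes of this corollary in one line, by observing that the configuration \qq{both robots set to $B$, both in a \wait or \look phase} is reached as an intermediate state inside the inductive analysis of \rrh and is resolved there; you instead attempt a direct reduction to \rre, \rrf, and \rrh. Your Case~1 is fine (in fact, at time $\tau$ the configuration is already $A$--$B$ with neither robot in a color-changing \compute, so \rre applies immediately). The problem is in Case~2.

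In Case~2 you claim that \qq{as soon as both robots are out of a \compute phase after $\tau'$, the premises of \rrh are met.} This overlooks an admissible interleaving. Suppose $r$ performs a \look at some $u\in(\tau,\tau')$, sees $s$ in $B$, and enters a \compute (destination $s(t)$, no color change) that lasts until some $\tau''$ far beyond $\tau'$ --- \compute phases in \asynch may be unboundedly long. Meanwhile $s$ turns $A$ at $\tau'$, completes its null \move and a \wait (each only $\epsilon$ long), performs a fresh \look at some $v<\tau''$, sees $r$ still in $A$, and enters a \compute that commits it to turning $B$ and moving to a midpoint. Then the first instant after $\tau'$ at which both robots are out of a \compute phase finds the colors equal to $A$ and $B$ (or worse: $r$ may by then have looked again, seen $s$ in $A$, and also committed to turning $B$), with both robots carrying non-null pending moves toward different midpoints. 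Neither the hypotheses of \rrh (both set to $A$, one stationary) nor those of \rre (neither robot in a color-changing \compute) are guaranteed at that instant, and the resulting both-$B$ configuration with pending midpoint moves is not an instance of the corollary's own hypothesis either, so you cannot close the loop. Handling these interleavings is exactly what the well-founded induction in the proof of \rrh is for; the clean fix is to note that after both robots have turned $B$ in that proof and finished their moves, the execution passes through precisely the configuration of this corollary, so the argument of \rrh already covers it.
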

\begin{proof}
The reasoning in the proof of \rrh also implicitly addresses this case. Indeed, the configuration in which both robots are set to $B$ and in a \wait or a \look phase is reached during the analysis, and is incidentally resolved, as well.
\end{proof}

\begin{theorem}\label{r9}
Algorithm~\ref{alg2} solves \rend in non-rigid \asynch and terminates correctly, regardless of the colors in the initial configuration.
\end{theorem}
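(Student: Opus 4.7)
The plan is to carry out a case analysis on the pair of initial colors, reducing each case to one of the lemmas \rre, \rrf, \rrh, \rrj already proved, or to another sub-configuration that does so. Up to symmetry, the initial color pairs are $(A,A)$, $(A,B)$, $(A,C)$, $(B,B)$, $(B,C)$, and $(C,C)$. At the start of the execution both robots are in a \wait phase, so the ``not in a \compute phase'' hypothesis of \rre and \rrh holds automatically at time $t=0$.

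I would first dispatch the easy cases. The pair $(A,B)$ is immediately covered by \rre, and $(B,B)$ by \rrj. The pair $(A,A)$ with distinct starting positions satisfies the hypothesis of \rrh at $t=0$, since neither robot has a pending move, so that lemma applies; when the two $A$-robots share a starting position, the first one to \look turns $C$, and \rrf finishes the argument.

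Next I would handle the cases involving a $C$-robot, using the fact (visible from the algorithm and \rrg) that a $C$-robot never moves and changes state only upon seeing another $C$-robot. In the $(A,C)$ case the $C$-robot stays fixed forever while the $A$-robot repeatedly moves toward it, covering at least $\delta$ per \move phase, until they coincide; on the $A$-robot's next \look it sees the other $C$-robot at the same location and turns $C$, so \rrf applies. The $(B,C)$ case is analogous, since a $B$-robot facing a $C$-robot at a different position also moves to it and, upon coincidence, turns $C$. For $(C,C)$ at the same position, each robot terminates at its first \look. For $(C,C)$ at different positions, the first robot $r$ to \look plans to turn $A$; depending on whether $s$'s first \look precedes or follows $r$'s color change, either both robots eventually end up at $A$ with different positions, falling within the scope of \rrh, or we are reduced to the $(A,C)$ case just treated.

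The main obstacle I expect is the bookkeeping needed to verify that, in the transitional sub-cases, the system actually reaches a state in which the hypotheses of \rrh or \rrf hold verbatim, rather than one that merely resembles them. For instance, when two $C$-robots both plan to turn $A$, one must wait until both color changes and the subsequent trivial \move phases have completed before invoking \rrh; and in the $(A,A)$ same-position case one must argue that a robot's commitment to turn $C$ cannot be undone by the other robot's concurrent actions. Apart from these routine timing checks, the theorem follows directly from the lemmas.
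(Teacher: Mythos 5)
Your overall strategy---a case analysis on the initial color pair, reducing each case to \rre, \rrf, \rrh, or \rrj---is exactly the paper's, and your treatment of $(A,B)$, $(B,B)$, $(A,A)$, the single-$C$ cases, and coincident $(C,C)$ matches the paper's proof. The one place you genuinely diverge is the non-coincident $(C,C)$ case in which both robots commit to turning $A$: the paper does not return to \rrh there, but instead follows the execution one more round --- the first robot to see the other in $A$ turns $B$ (or $C$ if they have meanwhile coincided, giving \rrf), after which either \rre applies or both robots reach $B$ and \rrj applies. Your shortcut through \rrh is not quite safe as stated: under an adversarial schedule the robot that turned $A$ first may re-\look, see the other in $A$, and turn $B$ while the second robot is still inside the long \compute phase in which it turns $A$; in that interleaving there is \emph{no} instant at which both robots are set to $A$ and neither is in a \compute phase, so the hypothesis of \rrh never holds verbatim, no matter how long you wait. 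The configuration is still resolved, but by \rre (the lagging robot's pending \compute does not change its color, so \rre's weaker hypothesis --- ``not in a \compute phase \emph{that will change its color}'' --- is satisfied), not by \rrh. Since you explicitly flag this bookkeeping as the remaining obstacle, this is a repairable omission rather than a wrong approach, but note that it is not merely a matter of waiting for the trivial \move phases to finish: one of the two interleavings genuinely requires falling back to a different lemma, which is precisely why the paper routes this case through \rre and \rrj instead.
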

\begin{proof}
If both robots start in $A$, \rrh applies. If they both start in $B$, \rrj applies. If one robot starts in $A$ and the other one starts in $B$, then \rre applies.

If exactly one robot starts in $C$, it will stay still forever, and the other robot will eventually reach it, turn $C$ as well, and both will terminate.

If both robots start in $C$ and they are coincident, they will terminate. If they are not coincident, let $r$ be the first robot to perform a \look. $r$ will then turn $A$ and move toward the other robot $s$. If $s$ performs its first \look when $r$ has already turned $A$, it will wait, $r$ will eventually reach it, turn $C$, and both will terminate. Otherwise, $s$ performs its \look when $r$ is still set to $C$, hence $s$ will turn $A$ as well, and move toward $r$.

Then, one robot will keep staying $A$ and moving toward the other one, until both have turned $A$. Without loss of generality, let $r$ be the first robot to see the other one set to $A$. If they are coincident, $r$ turns $C$ and \rrf applies. Otherwise, $r$ turns $B$. If this happens before $s$ has seen $r$ in $A$, then \rre applies. Otherwise, both robots will turn $B$. As long as only one robot has turned $B$, it stays $B$ and does not move. At some point, one robot sees the other in $B$ and \rrj applies.
\end{proof}

\section{Impossibility of rendezvous with two colors}\label{s4}

Observe that Algorithms~\ref{alg1},~\ref{alg3} and~\ref{alg2} only produce moves of three types: stay still, move to the midpoint, and move to the other robot. It turns out that, regardless of the number of available colors, any algorithm for \rend must use those three moves under some circumstances.

\begin{proposition}\label{r12}
For any algorithm solving \rend in rigid \fsynch, there exist a color $X$ and a distance $d>0$ such that any robot set to $X$ that sees the other robot at distance $d$ and set to $X$ moves to the midpoint.
\end{proposition}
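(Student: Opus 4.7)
The plan is to exploit a symmetry argument similar in spirit to the one used in the proof of \rrn, but carried across an entire execution of an arbitrary \rend algorithm in rigid \fsynch.

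First, I would set up the adversarial configuration. Place the two robots at distinct positions $P_1$ and $P_2$, both carrying the preset initial color $X_0$. Choose the robots' private coordinate systems so that they are related by a point reflection through the midpoint $M=(P_1+P_2)/2$: robot $2$'s origin is at $P_2$, its unit distance equals robot $1$'s, and its axes are obtained from robot $1$'s by a $180^\circ$ rotation (so that in robot $2$'s frame, robot $1$ appears at the same coordinate vector as robot $2$ appears in robot $1$'s frame). Because \fsynch activates both robots synchronously at every cycle, and both execute the same deterministic algorithm on identical snapshots, both robots will always compute the same destination point in their own coordinate systems and update their lights to the same new color. Thus, by induction on the cycle number, after every \move phase the two robots still carry a common color $X_k$, lie at point-symmetric positions with respect to their (invariant) midpoint $M$, and their coordinate systems remain related by the same point reflection.

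Next, I would transfer this symmetry to the global frame. Translating the common destination computed in local coordinates back to the global frame, robot $1$ moves to some point $Q_1$, while robot $2$ moves to its image $Q_2$ under reflection through $M$. These two global destinations coincide if and only if $Q_1=Q_2=M$, i.e., if and only if the computed destination in each robot's own frame is the midpoint of its position and the other robot's position. Since the given algorithm solves \rend, the two robots must coincide at some finite cycle $k^\star$; let $k^\star$ be the first such cycle. Then at the beginning of cycle $k^\star$ the robots are still at distinct symmetric positions with common color $X:=X_{k^\star}$ and distance $d:=\|P_1^{(k^\star)}-P_2^{(k^\star)}\|>0$, and the algorithm at cycle $k^\star$ must output the midpoint.

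Finally, I would observe that the algorithm is a deterministic function of the snapshot and of the executing robot's own light color; the snapshot available at cycle $k^\star$ is exactly ``the other robot is set to $X$ and is at distance $d$,'' and $me.light=X$. Hence for any robot, in any execution, whenever it finds itself in this very situation, the algorithm prescribes the midpoint move, which is exactly the content of the proposition.

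The main obstacle I anticipate is making the symmetry invariant fully rigorous: one must be careful that two robots with point-reflected frames really do receive snapshots that are related by the algorithm's built-in invariances (rotation, translation, and choice of unit), so that they produce the same destination in their own frames at every cycle. Once this invariant is established, the existence of the pair $(X,d)$ follows immediately from the fact that the first gathering moment forces a midpoint computation in a still-symmetric, non-coincident configuration.
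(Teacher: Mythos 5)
Your proposal is correct and follows essentially the same route as the paper's own proof: an adversarial point-symmetric placement of the two robots' coordinate frames, so that under \fsynch they always obtain isometric snapshots, keep equal colors, and compute destinations symmetric about the invariant midpoint, whence gathering can only occur at a cycle where the midpoint itself is computed. Your version merely makes the induction and the extraction of the pair $(X,d)$ at the first gathering cycle more explicit than the paper's two-sentence sketch.
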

\begin{proof}
Assume both robots start with the same color and in distinct positions. We may assume that both robots get isometric snapshots at each cycle, so they both turn the same colors, and compute destination points that are symmetric with respect to their midpoint. If they never compute the midpoint and their execution is rigid and fully synchronous, they never gather.
\end{proof}

\begin{proposition}\label{r21}
For any algorithm solving \rend in rigid \ssynch, there exist two colors $X$ and $Y$ and a distance $d>0$ such that any robot set to $X$ that sees the other robot at distance $d$ and set to $Y$ moves to the other robot's position.
\end{proposition}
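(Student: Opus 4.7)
The plan is to argue by contradiction: assume that for every choice of colors $X, Y$ and every distance $d > 0$, a robot set to $X$ that sees the other robot set to $Y$ at distance $d$ computes a destination different from the other robot's position. I will then construct an infinite execution in rigid \ssynch in which the robots never gather, contradicting the assumption that the algorithm solves \rend.

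The adversarial execution I have in mind is the simplest possible: place the two robots at any two distinct positions, with any initial colors and coordinate systems, and let the scheduler activate exactly one robot per cycle, alternating between the two. This is admissible in \ssynch because neither robot is ever inactive for more than one consecutive cycle.

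The core argument is then a one-line induction on the cycle number, showing that after every cycle the two robots still occupy distinct positions. In each cycle, exactly one robot $r$ is active; by the contradiction hypothesis, applied to $r$'s and $s$'s current colors and their current mutual distance, $r$'s computed destination is not $s$'s current location. Since $s$ is idle and $r$ reaches its destination (by rigidity), the two positions remain distinct at the end of the cycle. Hence the robots never coincide, contradicting the fact that the algorithm solves \rend.

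The main difficulty is essentially absent here: the alternating, single-robot schedule prevents any symmetric both-active move (such as computing a midpoint, which is the scenario handled by \rrl) from ever bringing the two robots together, so the whole question reduces to whether a single-robot move can produce coincidence, and by the contradiction hypothesis it cannot. The only real subtleties are negating the proposition correctly --- the destination being different from the other robot's position, rather than some weaker failure-to-gather condition --- and checking that the alternating schedule is \ssynch-admissible; both are routine.
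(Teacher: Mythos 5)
Your proof is correct and uses exactly the paper's argument: activate one robot per cycle alternately (which is \ssynch-admissible), and observe that since no active robot ever computes the idle robot's position, rigidity guarantees the positions stay distinct forever. The paper states this in two sentences; your version just spells out the negation and the induction more explicitly.
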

\begin{proof}
We activate one robot on even cycles, and the other robot on odd cycles. If no robot ever computes the other robot's position and they perform rigid movements, they never gather.
\end{proof}

\begin{proposition}\label{r22}
For any algorithm solving \rend in rigid \ssynch, there exist two colors $X$ and $Y$ and a distance $d>0$ such that any robot set to $X$ that sees the other robot at distance $d$ and set to $Y$ does not move.
\end{proposition}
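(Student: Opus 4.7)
The plan is to argue by contradiction: assume that for every pair of colors $X, Y$ and every distance $d > 0$ the algorithm prescribes $\lambda(X, Y, d) \neq 0$ (so no robot is ever ordered to stay still), and build an adversarial \ssynch schedule that never lets the two robots coincide. First I would record the three possible one-cycle distance updates in rigid \ssynch: denoting by $\lambda_i$ the coefficient that robot $r_i$ would compute from its current snapshot, activating only $r_1$ scales the distance by $|1-\lambda_1|$, activating only $r_2$ scales it by $|1-\lambda_2|$, and activating both scales it by $|1-\lambda_1-\lambda_2|$. So a given cycle causes rendezvous precisely when the chosen activation makes the corresponding factor vanish.

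The key observation, under the no-stay-still assumption, is that at every cycle the adversary has at least one action that activates $r_1$ without causing rendezvous, and at least one that activates $r_2$ without causing rendezvous. Indeed, if the \qq{only $r_1$} and the \qq{activate both} actions both caused rendezvous, then $\lambda_1 = 1$ and $\lambda_1 + \lambda_2 = 1$, forcing $\lambda_2 = 0$ and contradicting the assumption; the symmetric argument works for $r_2$.

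From here the adversary is straightforward: at odd cycles I would pick any safe action that activates $r_1$, and at even cycles any safe action that activates $r_2$. Each robot is then active every other cycle, so the schedule is a valid \ssynch schedule, and the distance stays strictly positive at every cycle, so rendezvous is never achieved---contradicting the assumption that the algorithm solves \rend. The only step requiring any real care is the \qq{safe action always exists} claim, which is essentially the small algebraic observation above; everything else is bookkeeping, and rigidity enters only to guarantee that the three distance-update formulas are exact (robots always reach their chosen destinations). I do not anticipate a serious obstacle.
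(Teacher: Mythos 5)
Your proposal is correct and takes essentially the same approach as the paper: the paper's proof also alternates single activations and switches to activating both robots exactly when the scheduled robot would compute the other's position, noting that this joint activation can only produce rendezvous if the other robot's move is null. Your explicit listing of the three distance factors and the ``a safe activation always exists'' observation merely spell out what the paper leaves implicit.
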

\begin{proof}
We keep activating only one robot at each cycle (alternately), except when one robot computes the other robot's position. Whenever this happens, we activate both robots for that cycle. If no robot ever performs a null move, they never gather.
\end{proof}

The above observations partly justify the choice to restrict our attention to a specific class of algorithms: from now on, every algorithm we consider computes only destinations of the form
$$(1-\lambda)\cdot me.position + \lambda \cdot other.position,$$
where the parameter $\lambda\in \mathbb R$ depends only on $me.light$ and $other.light$. Similarly, a robot's next light color depends only on the current colors of the two robots' lights, and not on their distance. Recall from Section~\ref{s1} that this class of algorithms is denoted by $\mathcal L$. Notice that Algorithms~\ref{alg1} and~\ref{alg3} both belong to $\mathcal L$, but Algorithm~\ref{alg2} does not, because it may output a different color depending if the two robots coincide or not.

A statement of the form $X(Y)=(Z,\lambda)$ is shorthand for \qq{if a robot is set to $X$ and sees the other robot set to $Y$, it turns $Z$ and makes a move with parameter $\lambda$}, where $\{X,Y,Z\}\subseteq\{A,B\}$ and $\lambda\in \mathbb R$. The negation of $X(Y)=(Z,\lambda)$ will be written as $X(Y)\neq(Z,\lambda)$, wheres a transition with an unspecified move parameter will be denoted by $X(Y)=(Z,\star)$.

\subsection{Preliminary results}

Here we assume that the model is rigid \asynch, that only two colors are available, namely $A$ and $B$, and that the initial configuration is with both robots set to $A$. All our impossibility results for this very special model are then applicable to both non-rigid \asynch with preset initial configuration and rigid \asynch with arbitrary initial configuration.

So, let an algorithm that solves \rend in this model be given. If the algorithm belongs to class $\mathcal L$, then the following statements hold.

\begin{lemma}\label{r13}
$A(A)=(B,\star)$.
\end{lemma}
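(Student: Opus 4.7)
The plan is to argue by contradiction. Suppose $A(A)\neq(B,\star)$. Since only the colors $A$ and $B$ are available and the algorithm is deterministic, the only remaining possibility is $A(A)=(A,\lambda)$ for some $\lambda\in\mathbb R$. I would then show by induction on cycles that, under the prescribed initial configuration in which both robots are set to $A$, neither robot ever changes its color: at every \look, each robot sees its peer still set to $A$ (by the inductive hypothesis that the peer has never changed color), and the rule $A(A)=(A,\lambda)$ prescribes that its own color be kept as $A$ as well.

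Consequently, on every execution starting from the all-$A$ configuration the algorithm produces only moves with the single parameter $\lambda$, and so its restriction to such executions coincides with a one-color algorithm in $\mathcal L$. Since the rigid \asynch adversary subsumes the rigid \ssynch adversary, the algorithm would then have to solve \rend in rigid \ssynch as a one-color algorithm, which directly contradicts \rrn.

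The only point that needs to be checked is that the impossibility of \rrn genuinely applies to this induced one-color behaviour. This is clean because, once the induction has pinned both robots to the color $A$ forever, the whole system reduces precisely to the scenario handled in the proof of \rrn; indeed, for algorithms in $\mathcal L$ the axis symmetry used there is automatic, since each move $(1-\lambda)\cdot me.position+\lambda\cdot other.position$ depends only on the two positions and not on the robots' local coordinate systems. No subtle scheduling argument is required here: the only mild obstacle is verifying that the induction is tight enough to forbid any escape from the $(A,A)$-loop, which holds trivially because with only two colors no other transition rule of the algorithm can ever be triggered along such an execution.
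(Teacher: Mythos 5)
Your proposal is correct and follows essentially the same route as the paper: observe that $A(A)=(A,\star)$ traps both robots in color $A$ forever, so the algorithm degenerates to a one-color algorithm, and then invoke \rrn (via the inclusion of rigid \ssynch schedules among rigid \asynch schedules) to derive the contradiction. The paper states this in one line; your added detail on the induction and the scheduler subsumption is just an elaboration of the same argument.
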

\begin{proof}
If the execution starts with both robots in $A$, and $A(A)=(A,\star)$, then no robot ever transitions to $B$, and \rend is not solvable, due to \rrn.
\end{proof}

\begin{lemma}\label{r16}
If $A(A)=(B,\oneh)$, then $B(A)=(B,\star)$.
\end{lemma}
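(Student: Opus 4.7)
I would prove this by contradiction: assume $B(A)=(A,\mu)$ for some $\mu\in\mathbb{R}$, and construct a rigid asynchronous execution, starting with both robots in color $A$, that fails to gather.

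The central observation is that activating a single robot $r$ twice while $s$ stays in an arbitrarily long (but finite) \wait phase returns the system to the state $(A,A)$ with the inter-robot distance multiplied by $(1-\mu)/2$. Indeed, the first activation applies $A(A)=(B,\oneh)$ and places $r$ at the midpoint in color $B$; the second applies the hypothesized $B(A)=(A,\mu)$ and places $r$ at $(1-\mu)\cdot\text{midpoint}+\mu\cdot s.\text{position}$ in color $A$. The new distance is $|1-\mu|/2$ times the old one, by direct calculation.

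Case 1: $\mu\ne 1$. The contraction factor is nonzero. The adversary alternates blocks of two $r$-only activations with blocks of two $s$-only activations; by symmetry each block again multiplies the distance by $(1-\mu)/2$. No single activation in this schedule makes the moving robot coincide with the stationary one (the only move parameters involved are $\oneh$ and $\mu\ne 1$, and coincidence would require the distance already to be zero). Hence the two robots remain at strictly positive distance throughout an infinite execution, contradicting \rend.

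Case 2: $\mu=1$. Here two $r$-only activations would gather them, so I exploit the asynchronous interleaving of phases together with the convention that a light changes only at the very end of \compute. Begin both \look phases simultaneously at time $0$: both plan $A(A)=(B,\oneh)$ with destination $d/2$ and new color $B$. Keep $s$ arbitrarily long inside its \compute, so $s$'s light is still $A$. In the meantime let $r$ complete its cycle (arriving at $d/2$ in color $B$), perform a second \look seeing $s$ still in color $A$, apply $B(A)=(A,1)$, and jump to $s$'s original position $d$ in color $A$. Only then does $s$ finish its now-stale \compute and move to $d/2$ in color $B$. The resulting state has $r$ at $d$ in color $A$ and $s$ at $d/2$ in color $B$, at positive distance $d/2$ in mixed colors.

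The principal obstacle is extending the $\mu=1$ case into an infinite non-gathering execution regardless of the unspecified rules $A(B)$ and $B(B)$: the robots must fail to gather forever, not just in one cycle. I would handle this by iterating the same phase-staggering trick from the produced $(A,B)$-state, using the fact that whenever the $A$-robot performs a \look while the $B$-robot is trapped in \compute, the rule $B(A)=(A,1)$ again produces an overshoot. A short case analysis on the possible values of $A(B)$ and $B(B)$ shows that every alternative either allows this iteration to continue or lets the adversary return the system to an $(A,A)$-configuration at strictly positive distance, to which the Case 1 argument applies; either way, the algorithm fails to solve \rend, contradicting the assumption.
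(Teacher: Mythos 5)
Your Case 1 ($\mu\neq 1$) is essentially the paper's argument and is fine: two consecutive activations of one robot return it to color $A$ while multiplying the distance by $|1-\mu|/2\neq 0$, and alternating such blocks yields a fair, never-gathering execution. The gap is in Case 2. Your staggered schedule is legitimate up to the point where it produces the configuration ($r$ at $d$ in color $A$, $s$ at $d/2$ in color $B$), but from there the evolution is governed by the unspecified rules $A(B)$ and $B(B)$, and the ``short case analysis'' you defer is neither short nor obviously true. For instance, if $A(B)=(A,1)$, then from your mixed state activating $r$ alone makes the robots coincide and then remain coincident (the only applicable rules at distance zero are $A(B)=(A,1)$, $B(A)=(A,1)$ and $A(A)=(B,\oneh)$, all of which are null moves there), activating $s$ alone does the same via $B(A)=(A,1)$, and activating both simultaneously merely swaps them into an $(A,A)$ state at the \emph{same} distance with $\mu=1$ --- to which your Case 1 contraction argument does not apply. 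So the iteration you propose does not visibly continue, and you have not exhibited a non-gathering execution for this subcase. The claim ``every alternative either allows the iteration to continue or returns to an $(A,A)$-configuration at positive distance'' is exactly what needs proof and is missing.

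The paper avoids this entirely by choosing a schedule for $\mu=1$ that never invokes $A(B)$ or $B(B)$: let $r$ perform a whole cycle (it turns $B$ and moves to the midpoint via $A(A)$) plus the \look and \compute of a second cycle (it sees $s$ still in $A$, so by $B(A)=(A,1)$ it turns back to $A$ with destination $s$'s position); then let $s$ perform two whole cycles (seeing $r$ in $A$ both times, so $s$ uses only $A(A)$ and then $B(A)$, ending in color $A$ at $r$'s midpoint location); finally let $r$ complete its pending move. Both robots are then in color $A$, in a \wait phase, at half the original distance, so the pattern repeats forever. If you want to salvage your construction instead, you must actually carry out the full case analysis on $A(B)$ and $B(B)$; as written, the proof of Case 2 is incomplete.
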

\begin{proof}
Let us assume by contradiction that $B(A)=(A,\star)$. If $B(A)=(A,\lambda)$ with $\lambda\neq 1$, we let the two robots execute two cycles each, alternately. As a result, each robot keeps seeing the other robot in $A$, and their distance is multiplied by $|1-\lambda|/2\neq 0$ at every turn. Hence the robots never gather.

If $B(A)=(A,1)$, we let robot $r$ perform a whole cycle and the \look and \compute phases of the next cycle, while the other robot $s$ waits. At this point, their distance has halved, $r$ is set to $A$, and is about to move to $s$'s position. Now $s$ performs two whole cycles, reaching $r$'s position with its light set to $A$. Finally, we let $r$ finish its cycle. As a result, the distance between the two robots has halved, both robots have performed at least a cycle, they are in a \wait phase, and they are both set to $A$. Hence, by repeating the same pattern of moves, they never gather.
\end{proof}

\begin{lemma}\label{r19}
If $A(A)=(B,\oneh)$ and $B(B)=(A,\star)$, then $B(B)=(A,0)$.
\end{lemma}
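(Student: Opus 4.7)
Suppose for contradiction that $B(B) = (A, \lambda)$ with $\lambda \neq 0$. By \rrp, write $B(A) = (B, \nu)$ for some $\nu \in \mathbb{R}$. Starting from the preset $(A,A)$ configuration at distance $d > 0$, the plan is to exhibit an adversarial rigid-\asynch schedule in which the two robots never coincide, contradicting the algorithm's correctness.

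The heart of the schedule is a \emph{super-cycle} that maps an $(A,A)$ configuration at distance $D$, with both robots in a \wait phase, to another such configuration at distance $D' > 0$; iterating the super-cycle then yields a non-gathering execution. In Phase~1 (asynchronous staging) I have $r$ perform only its \look phase (so $r$ plans the transition $A(A) = (B, \oneh)$ aimed at the midpoint $m$) and then stall $r$ in its \compute. During that time, $s$ performs two consecutive full cycles: in the first, $s$ sees $r$ still colored $A$ and invokes $A(A) = (B, \oneh)$, arriving at $m$ colored $B$; in the second, still seeing $r$ colored $A$, $s$ invokes $B(A) = (B, \nu)$, displacing itself by parameter $\nu$ while remaining $B$. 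Finally I unfreeze $r$, which completes its \compute and \move as originally planned, arriving at $m$ colored $B$. A short computation with the positions $r$, $s$, $m$ shows the two robots are now both colored $B$ at distance $|\nu|\, D/2$. In Phase~2 (synchronous execution) I activate both robots together for one cycle: each invokes $B(B) = (A, \lambda)$, moves by parameter $\lambda$, and turns $A$. Their relative displacement transforms as $r - s \mapsto (1-2\lambda)(r - s)$, so the new separation equals $|1 - 2\lambda|$ times the prior, and the super-cycle multiplies the distance by $|(1-2\lambda)\nu|/2$.

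In the generic case $\nu \neq 0$ and $\lambda \neq \oneh$ this multiplier is strictly positive, so iterating the super-cycle gives an infinite schedule with both robots always in $(A,A)$ \wait phases at positive distance, the desired contradiction. The hard part is the two degenerate cases in which the multiplier vanishes: when $\nu = 0$, Phase~1 ends with the robots coincident at $m$, and when $\lambda = \oneh$, Phase~2 gathers them at the midpoint. Each requires a modified schedule. For $\nu = 0$ I would replace Phase~1 by a staging that instead invokes the $A(B)$ transition (whose form is constrained by the earlier lemmas in this section) to populate $s$ in color $B$ at a position distinct from $r$. For $\lambda = \oneh$ I would make Phase~2 asynchronous: execute $B(B)$ on one robot alone, then re-apply $B(A) = (B, \nu)$ to the other before both return to color $A$, so that they end again at $(A, A)$ with positive separation. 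In each sub-case the modified super-cycle will still have a positive distance-multiplier, and the argument will conclude exactly as in the generic case.
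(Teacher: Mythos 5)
There is a genuine gap: your argument only closes the ``generic'' case $\nu\neq 0$, $\lambda\neq\oneh$, and the two degenerate cases are left as sketches that do not actually go through. The case $\nu=0$ is not a corner case you can defer --- it is precisely what must hold in any surviving algorithm (\rrq later proves $B(A)=(B,0)$), so a proof that breaks when $\nu=0$ misses the main target. Your proposed repair for $\nu=0$ invokes constraints on the $A(B)$ transition ``from the earlier lemmas in this section,'' but no such constraints exist at this point: the only lemma restricting $A(B)$ is \rrr, whose hypothesis is $B(B)=(A,0)$ --- the very conclusion you are trying to establish --- so using it here would be circular. The $\lambda=\oneh$ patch is likewise unverified: after one robot alone executes $B(B)$ and turns $A$, its next \look shows the pair $(A,B)$, again routing through the unconstrained $A(B)$ rule, and if $\nu=0$ the ``re-apply $B(A)=(B,\nu)$'' step produces no displacement at all. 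As written, the proof establishes the lemma only under extra hypotheses it has no right to assume.

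The paper avoids all of this with a single schedule whose distance multiplier is $|\lambda|/2$, nonzero exactly when $\lambda\neq 0$, so no case split on $\nu$ or on $\lambda=\oneh$ is needed. The trick is to stage the two $B(B)$ invocations asymmetrically rather than synchronously: both robots \look and \compute simultaneously (both turn $B$, both aim at the midpoint $m$); then $r$ completes its move to $m$ and performs its next \look \emph{while $s$ is still at its original position and still set to $B$}, so $r$'s $B(B)$ move displaces it from $m$ by $|\lambda|\cdot d/2$; only afterwards does $s$ reach $m$ and perform its \look, seeing $r$ still in $B$ and still at $m$ (coincident with $s$), so $s$'s $B(B)$ move is null and it stays at $m$. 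Both robots end in $A$, in a \wait phase, at distance $|\lambda|\,d/2>0$, and the pattern repeats. If you want to salvage your own construction, you would need to replace your Phase~1 and Phase~2 with something of this flavor --- exploiting the timing freedom of \asynch so that one robot's $B(B)$ move is evaluated against a displaced snapshot while the other's is evaluated at coincidence --- rather than relying on $\nu$ or on $1-2\lambda$ to keep the robots apart.
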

\begin{proof}
Assume by contradiction that $A(A)=(B,\oneh)$ and $B(B)=(A,\lambda)$ with $\lambda\neq 0$. We let both robots perform a \look and a \compute phase simultaneously. Both turn $B$ and compute the midpoint $m$. Then we let robot $r$ finish the current cycle and perform a new \look. As a result, $r$ will turn $A$ and will move away from $m$. Now let the other robot $s$ finish its first cycle and perform a whole new cycle. $s$ reaches $m$, sees $r$ still set to $B$ and still in $m$, hence $s$ turns $A$ and stays in $m$. Finally, we let $r$ finish the current cycle. At this point, both robots are set to $A$, they are in a \wait phase, both have performed at least one cycle, and their distance has been multiplied by $|\lambda|/2\neq 0$. Therefore, by repeating the same pattern of moves, they never gather.
\end{proof}

\begin{lemma}\label{r17}
If $A(A)=(B,\oneh)$ and $B(B)=(A,0)$, then $B(A)=(B,0)$.
\end{lemma}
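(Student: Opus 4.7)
The plan is to assume for contradiction that $B(A)=(B,\lambda)$ with $\lambda\neq 0$ and to exhibit an adversarial execution in rigid \asynch under which the two robots' exact positions remain distinct forever. As in the proof of \rrs, I aim to isolate a bounded motif of phase activations that returns the system to a configuration equivalent to the starting one---both robots set to $A$ and in \wait---but at a strictly smaller, nonzero distance, so that iterating the motif yields a non-convergent execution.

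The core motif starts from both robots set to $A$ at distinct positions $p_0,p_1$ with distance $d>0$. First, let $r$ execute a full cycle alone, ending at the midpoint $m$ set to $B$ by $A(A)=(B,\oneh)$. Then let $r$ perform the \look and \compute of a second cycle: seeing $s$ at $p_1$ set to $A$, by the assumed $B(A)=(B,\lambda)$ it plans to stay $B$ and move to $r_1:=(1-\lambda)m+\lambda p_1$. Simultaneously with $r$'s \compute, let $s$ execute its own \look and \compute: seeing $r$ at $m$ set to $B$, $s$ applies $A(B)=(X,\mu)$, planning new color $X$ and destination $s_1:=(1-\mu)p_1+\mu m$. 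Finally, let both robots complete their \move phases, so that $r$ sits at $r_1$ set to $B$ and $s$ sits at $s_1$ set to $X$.

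If $X=B$, a concluding simultaneous activation of both robots invokes $B(B)=(A,0)$ on each, returning them to color $A$ without moving. The resulting state has both robots set to $A$, in \wait, at distance $|\lambda+\mu-1|\cdot d/2$, which is strictly positive whenever $\lambda+\mu\neq 1$; hence the motif can be iterated, scaling the distance by a fixed nonzero ratio each time, and the robots never coincide. If instead $X=A$, then $r$ remains $B$ and $s$ remains $A$, and an analogous sequence of asymmetric activations (in which $r$ repeatedly applies $B(A)$ while $s$ is maneuvered via $A(B)$) brings the system back to both robots set to $A$ at a nonzero scaled distance.

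The main obstacle is that $A(B)$ is unconstrained by the hypotheses, so the argument splits by the color outcome $X$ and the parameter $\mu$, and in particular by the degenerate relations $\lambda+\mu=1$, $\mu=1$, and $\lambda=1$ under which the above motif causes an incidental coincidence of the two robots at an intermediate step. These degenerate sub-cases must be treated by alternative \qq{swap}-style schedules in which both robots are activated simultaneously and made to jump symmetrically, so that the distance is again preserved (or multiplied by a nonzero factor) at each iteration.
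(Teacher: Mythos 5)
There is a genuine gap. Your contradiction hypothesis is already incomplete: the negation of $B(A)=(B,0)$ also includes $B(A)=(A,\star)$, so you must first invoke \rrp to pin down the color component before assuming $B(A)=(B,\lambda)$ with $\lambda\neq 0$, as the paper does. More seriously, your schedule begins by letting $r$ complete a full cycle alone, so that $s$'s subsequent \look necessarily sees $r$ set to $B$ and invokes the transition $A(B)=(X,\mu)$, which is completely unconstrained by the hypotheses. This is what forces your case split on $X$ and $\mu$ and produces the degenerate relations $\lambda+\mu=1$, $\mu=1$, $\lambda=1$, and for those you only gesture at \qq{swap-style schedules} without constructing them. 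That gesture does not obviously work: from a configuration with both robots set to $A$, activating both simultaneously applies $A(A)=(B,\oneh)$ to each and sends them both to the common midpoint, i.e.\ it \emph{gathers} them, which is exactly what the adversary must avoid. So the degenerate sub-cases are genuinely unresolved, and the proof as written does not go through.

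The paper's schedule sidesteps all of this by never letting either robot evaluate $A(B)$: both robots perform their first \look \emph{simultaneously} while both are still set to $A$, so each plans $(B,\oneh)$ toward the midpoint $m$. Then $r$ finishes its cycle and performs a second one while $s$ is still frozen mid-\compute (hence still displaying $A$), so $r$ applies $B(A)=(B,\lambda)$ and moves to distance $|\lambda|\,d/2$ from $m$; then $s$ completes its move to $m$; finally a simultaneous cycle applies $B(B)=(A,0)$ to both. This returns the system to both robots in $A$, in \wait, at the nonzero distance $|\lambda|\,d/2$, with no dependence on $A(B)$ and no degenerate sub-cases. You should restructure your motif along these lines rather than trying to patch the $A(B)$-dependent branches.
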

\begin{proof}
By \rrp, $B(A)=(B,\star)$. Assume by contradiction that $B(A)=(B,\lambda)$ with $\lambda\neq 0$. We let both robots perform a \look simultaneously, so both plan to turn $B$ and move to the midpoint $m$. We let robot $r$ finish the cycle, while the other robot $s$ waits. Then we let $r$ perform a whole other cycle. So $r$ sees $s$ still in $A$, and moves away from $m$, while staying $B$. Now we let $s$ finish its first cycle and move to $m$. Finally, we let both robots perform a new cycle simultaneously. As a result, both robots are set to $A$ and are in a \wait phase, both have performed at least one cycle, and their distance has been multiplied by $|\lambda|/2\neq 0$. By repeating the same pattern of moves, they never gather.
\end{proof}

\begin{lemma}\label{r18}
If $A(A)=(B,\oneh)$ and $B(B)=(A,0)$, then $A(B)=(A,1)$.
\end{lemma}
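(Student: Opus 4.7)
My plan is to proceed by contradiction, assuming $A(B) \neq (A, 1)$ and building an adversarial schedule in rigid \asynch, starting with both robots set to $A$, that prevents rendezvous. By \rrq we have $B(A) = (B, 0)$, and the hypotheses give $A(A) = (B, \oneh)$ and $B(B) = (A, 0)$, so the only undetermined transition is $A(B) = (X, \lambda)$. I first treat the case $X = A$ with $\lambda \neq 1$. Starting with both robots set to $A$ at distance $d$, I let $r$ execute a full cycle, landing at the midpoint $m$ in color $B$. From then on $r$ always sees $s$ in $A$ and, by $B(A) = (B, 0)$, stays at $m$. Each cycle of $s$ applies $A(B) = (A, \lambda)$, replacing $s$'s position by $(1 - \lambda) s + \lambda m$ and leaving $s$ in $A$, so the distance between $s$ and $m$ is multiplied by the nonzero factor $|1 - \lambda|$. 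Since $s$ always reaches its destination exactly in the rigid model, this distance stays positive forever and the robots never coincide.

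Next I handle the case $X = B$, $A(B) = (B, \alpha)$, first assuming $\alpha \neq 1$. Starting again with both robots set to $A$ at distance $d$, I let $r$ complete a full cycle, ending at $m$ in $B$, and then let $s$ complete a full cycle doing $A(B) = (B, \alpha)$, so that $s$ ends at $(1 - \alpha) q_0 + \alpha m$ in $B$. The two robots are now both set to $B$ at distance $|1 - \alpha| d / 2 > 0$, and I make them look simultaneously, so that both execute $B(B) = (A, 0)$, turn $A$, and stay. The resulting configuration has the same type as the initial one, but with distance scaled by the nonzero factor $|1 - \alpha| / 2$. Iterating this schedule produces an infinite sequence of strictly positive distances, so the robots never coincide.

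The remaining case $A(B) = (B, 1)$ is the main obstacle, because the schedule from the previous paragraph would bring $s$ exactly to $m$ and cause gathering. Here my plan is to exploit the adversary's freedom to keep a moving robot at its starting position during an arbitrarily long prefix of a \move phase. Starting once more with both robots set to $A$ at distance $d$, I let $r$ complete a full cycle to reach $m$ in $B$. Then I let $s$ perform a \look and a \compute, planning destination $m$ and color $B$, and enter a \move phase during which $s$ remains at $q_0$ in color $B$. While $s$ lingers there, I schedule $r$ for one cycle of $B(B) = (A, 0)$ (turning $A$ and staying at $m$), followed by one cycle of $A(B) = (B, 1)$ with destination $q_0$, which relocates $r$ to $q_0$ in color $B$. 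Only then do I let $s$ complete its \move and reach $m$, leaving both robots in $B$ at distance $d/2$; a final pair of simultaneous $B(B) = (A, 0)$ looks then returns the configuration to both robots set to $A$ at distance $d/2$, enabling the same recursion as before. Since all three cases yield contradictions, the conclusion $A(B) = (A, 1)$ follows.
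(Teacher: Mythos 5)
Your proof is correct and takes essentially the same route as the paper's: the same three-way case split on $A(B)$, with the cases $A(B)=(A,\lambda)$, $\lambda\neq 1$, and $A(B)=(B,\alpha)$, $\alpha\neq 1$, argued identically, and the case $A(B)=(B,1)$ handled by a slightly different but equally valid adversarial schedule (you hold $s$ at its starting point inside its \move phase and have $r$ cycle $B\to A\to B$ to chase $q_0$, whereas the paper holds $r$ between its \compute and \move so that $s$ chases $r$'s old position). Both schedules reach the same self-similar configuration with the distance halved, so the conclusion follows.
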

\begin{proof}
Let us first assume that $A(B)=(B,\lambda)$ with $\lambda\neq 1$. We let one robot perform a whole cycle, thus turning $B$ and moving to the midpoint. Then we let the other robot perform a cycle, at the end of which both robots are set to $B$. Finally, we let both robots perform a cycle simultaneously, after which they are back to $A$ and in a \wait phase. Because their distance has been multiplied by $|1-\lambda|/2\neq 0$, by repeating the same pattern of moves they never gather.

Assume now that $A(B)=(B,1)$. We let robot $r$ perform a \look and a \compute phase, thus turning $B$ and computing the midpoint. Now we let the other robot $s$ perform a whole cycle, at the end of which it is set to $B$ and has reached $r$. Then we let $r$ finish its cycle, moving away from $s$. Finally, we let both robots perform a new cycle simultaneously, which takes them back to $A$. Their distance has now halved, and by repeating the same pattern of moves they never gather.

Assume that $A(B)=(A,\star)$, and let robot $r$ perform an entire cycle, thus turning $B$ and moving to the midpoint. Due to \rrq, $B(A)=(B,0)$, which means that, from now on, both robots will retain colors. Hence, $r$ will always stay still, and $s$ will never reach $r$ unless $A(B)=(A,1)$.
\end{proof}

\subsection{Rigid asynchronous model with arbitrary initial configuration}

\begin{lemma}\label{r23}
Algorithm~\ref{alg1} does not solve \rend in rigid \asynch, if both robots are set to $B$ in the initial configuration.
\end{lemma}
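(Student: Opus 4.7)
The plan is to exhibit a specific adversarial execution of Algorithm~\ref{alg1} in rigid \asynch, starting from configuration $(B,B)$ at some positive distance~$d$, under which the two robots never rendezvous. The first step is to record the full transition table of Algorithm~\ref{alg1} in the two-color state space: $A(A)=(B,\oneh)$, $A(B)=(A,1)$, $B(A)=(B,0)$, and $B(B)=(A,0)$. From $(B,B)$ the only color-changing transition available is $B(B)$, which turns a robot to $A$ without moving it; once any $(A,B)$ or $(B,A)$ configuration is reached, the $A$-robot's transition $A(B)=(A,1)$ would otherwise chase the $B$-robot in one rigid move, which is exactly the pattern I must block.

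The adversary exploits two features of the rigid \asynch model. First, a robot's light only changes at the very end of its \compute phase, so a \look executed while the other robot is still in \compute observes the stale, pre-transition color. Second, \move phases may last arbitrarily long (though finitely), so a robot can be caught in mid-transit by the other robot's \look. In the spirit of the adversarial schedules built in Lemmas~\ref{r16}--\ref{r19}, I would activate the two robots in carefully coordinated alternating blocks of cycles so that at the end of each block the configuration is a scaled copy of the initial $(B,B)$ setup: same colors, same relative phases, but with a strictly smaller positive distance. Iterating the block indefinitely makes the inter-robot distance shrink by a fixed factor each round but never reach zero, so the robots do not gather.

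The main obstacle is the rigidity constraint: whenever $A(B)$ fires between two distinct positions, the $A$-robot is guaranteed to reach the $B$-robot's committed point in a single \move. The crux of the proof will therefore be to time the schedule so that $A(B)$ is only ever invoked while the $B$-robot is itself partway along a \move toward some distinct destination, so that by the time both \move phases complete the robots end up at two different points, and the round can be repeated on the reduced (but still positive) distance. I expect this interleaving and the verification that all phases remain finite and both robots remain active to be the delicate part, reducing to a careful case analysis of the possible overlaps of \look, \compute, and \move under the fixed transition table.
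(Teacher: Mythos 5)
Your overall strategy matches the paper's: exhibit an adversarial schedule that returns the system to a $(B,B)$ \wait configuration at a strictly smaller positive distance, and iterate forever. You also correctly extract the transition table and identify the right adversarial tools (stale color reads taken while the other robot is still in its \compute phase, and control over phase durations). The problem is that the proposal stops exactly where the proof begins: the schedule is never constructed, and that schedule is the entire content of this lemma. Moreover, the one concrete mechanism you offer for the crux --- firing $A(B)$ only while the $B$-robot is partway through a \move toward a distinct destination --- cannot work. In Algorithm~\ref{alg1} the only way a robot can be moving while displaying $B$ is that it has just executed $A(A)=(B,\oneh)$; when that \move ends you are left with one robot set to $A$, one set to $B$, and neither in a color-changing \compute phase, which is precisely the hypothesis of \rrd, from which Algorithm~\ref{alg1} \emph{does} solve \rend (the $A$-robot chases, the $B$-robot halts and neither ever changes color again). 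So a block built on that idea cannot close back to $(B,B)$.

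The schedule that works lets the $A(B)$ chase run to completion instead of dodging it, and relies on two stale reads. Let both robots perform a \look in $(B,B)$, so both commit to turning $A$ in place. Robot $r$ turns $A$ and performs another \look while $s$ is still in its \compute phase, so $r$ reads a stale $B$ and heads to $s$'s (stationary) position via $A(B)=(A,1)$. Then $s$ turns $A$ and performs a \look while $r$ is still at its origin, reads $A$, and commits to turning $B$ and moving to the midpoint $m$ of the original positions. Next, $r$ reaches $s$'s old position and performs a \look while $s$ still displays a stale $A$, so $r$ turns $B$ with a null midpoint move; finally $s$ moves off to $m$. Both robots are back to $(B,B)$ in a \wait phase, at distance $d/2$, and the pattern repeats. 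Note that the robots momentarily coincide in this schedule; that is harmless, since \rend requires them to remain together forever, whereas insisting on avoiding the coincidence pushes you into the \rrd configuration. An explicit interleaving of this kind is what your proof is missing.
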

\begin{proof}
Let both robots perform a \look phase, so that both will turn $A$. We let robot $r$ finish the current cycle and perform a new \look, while the other robot $s$ waits. Hence, $r$ will stay $A$ and move to $s$'s position. Now we let $s$ finish the current cycle and perform a new \look. So $s$ will turn $B$ and move to the midpoint $m$. We let $r$ finish the current cycle, thus reaching $s$, and perform a whole new cycle, thus turning $B$. Finally, we let $s$ finish the current cycle, thus turning $B$ and moving to $m$. As a result, both robots are again set to $B$, they are in a \wait phase, both have executed at least one cycle, and their distance has halved. Thus, by repeating the same pattern of moves, they never gather.
\end{proof}

\begin{theorem}\label{r15}
There is no algorithm of class $\mathcal L$ that solves \rend using two colors in rigid \asynch from all possible initial configurations.
\end{theorem}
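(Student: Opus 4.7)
The plan is to assume for contradiction that some algorithm $\mathcal A$ of class $\mathcal L$ with two colors solves \rend in rigid \asynch from every initial configuration, and then to drive the analysis until $\mathcal A$ coincides with Algorithm~\ref{alg1}, at which point \rrw supplies the contradicting starting configuration.

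First I would apply the assumption to the configuration where both lights are $A$, so that every preliminary lemma of this section becomes available. From \rrm I obtain $A(A)=(B,\star)$. Since every algorithm solving \rend in rigid \asynch must in particular survive any rigid \fsynch schedule, \rrl applies, and because class $\mathcal L$ forbids move parameters from depending on distance, its conclusion upgrades to $A(A)=(\star,\oneh)$. Together, $A(A)=(B,\oneh)$, and \rrp then yields $B(A)=(B,\star)$.

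Next I would split on the color change at $B(B)$. In the case $B(B)=(A,\star)$, chaining \rrs, \rrq, and \rrr in that order forces $B(B)=(A,0)$, $B(A)=(B,0)$, and $A(B)=(A,1)$, so all four transitions of $\mathcal A$ coincide with those of Algorithm~\ref{alg1}. Since $\mathcal A$ is assumed to work from every initial configuration, it must in particular work from both lights set to $B$, but that is precisely what \rrw rules out.

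The main obstacle is the residual case $B(B)=(B,\star)$, which the preliminary lemmas do not address. Here I would observe that starting from both lights $B$ the execution never leaves the all-$B$ regime, so $\mathcal A$ is effectively a single-color algorithm, governed only by the move parameter $\lambda$ of $B(B)$. I would then borrow the adversary from the proof of \rrn: orient the two robots' coordinate systems in opposite directions so that their snapshots remain perpetually isometric; activate both robots simultaneously whenever $\lambda\neq\oneh$ (symmetric moves multiply the distance by $|1-2\lambda|\neq 0$), and alternate single activations whenever $\lambda=\oneh$ (the distance halves at each step but never reaches zero). This construction is a rigid \ssynch schedule, and hence a valid rigid \asynch schedule, on which $\mathcal A$ never gathers, closing the contradiction.
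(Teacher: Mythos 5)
Your overall strategy is the paper's: use the preliminary lemmas to force the algorithm to coincide with Algorithm~\ref{alg1} and then invoke \rrw. The one genuine gap is the step where you claim that the conclusion of \rrl ``upgrades to $A(A)=(\star,\oneh)$.'' \rrl only asserts that \emph{some} color $X$ satisfies $X(X)=(\star,\oneh)$; it does not say $X=A$. In the symmetric fully synchronous execution both the $A(A)$ and the $B(B)$ transitions occur, so the midpoint move could just as well be attached to $B(B)$, in which case your subsequent applications of \rrp, \rrs, \rrq, and \rrr (all of which hypothesize $A(A)=(B,\oneh)$) would be vacuous. The paper repairs this with a without-loss-of-generality relabeling of $A$ and $B$, but that relabeling is legitimate only once you know $B(B)=(A,\star)$ in addition to $A(A)=(B,\star)$ --- and that is exactly what \rrm gives when applied to the initial configuration with both robots set to $B$, which is admissible here precisely because the initial configuration is arbitrary. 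So the fix is to reorder: first obtain $A(A)=(B,\star)$ and $B(B)=(A,\star)$ from the two symmetric instances of \rrm, then conclude from \rrl that $A(A)=(B,\oneh)$ or $B(B)=(A,\oneh)$, and take the former without loss of generality.

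This reordering also makes your ``residual case'' $B(B)=(B,\star)$ disappear: it is excluded outright by the $B$-relabelled instance of \rrm. Your ad hoc treatment of that case (observe that the execution never leaves the all-$B$ regime and replay the adversary of \rrn) is correct, but it is essentially an unrolling of the proof of \rrm and becomes unnecessary once the symmetric lemma is invoked. With these two adjustments your argument coincides with the paper's proof.
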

\begin{proof}
Because robots may start both in $A$ or both in $B$, the statement of \rrm, holds also with $A$ and $B$ exchanged. Hence $A(A)=(B,\star)$, but also $B(B)=(A,\star)$. Moreover, by \rrl, either $A(A)=(B,\oneh)$ or $B(B)=(A,\oneh)$. By symmetry, we may assume without loss of generality that $A(A)=(B,\oneh)$. Now, by \rrs, $B(B)=(A,0)$. Additionally, by \rrq and \rrr, $B(A)=(B,0)$ and $A(B)=(A,1)$. These rules define exactly Algorithm~\ref{alg1}, which is not a solution, due to \rrw.
\end{proof}

\subsection{Non-rigid asynchronous model with preset initial configuration}

\begin{theorem}\label{r11}
There is no algorithm of class $\mathcal L$ that solves \rend using two colors in non-rigid \asynch, even assuming that both robots are set to a predetermined color in the initial configuration.
\end{theorem}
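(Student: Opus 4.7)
The plan is to reduce this impossibility to \rro. Assume, for contradiction, that some algorithm $\mathcal A\in\mathcal L$ using only two colors solves \rend in non-rigid \asynch starting from a preset configuration; without loss of generality both lights start at $A$. I will show that $\mathcal A$ then solves \rend in non-rigid \asynch from \emph{every} initial color configuration at every pair of distinct positions, and hence also in rigid \asynch (a sub-model) from arbitrary initial configurations, contradicting \rro.

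By \rrm (which applies since rigid \asynch is a sub-model of non-rigid \asynch), $A(A)=(B,\star)$; let $\lambda$ be its move parameter. The key step, and what I expect to be the main technical burden, is a reachability claim: for any target snapshot $(c_r,c_s,q_r,q_s)$ with $q_r\neq q_s$, one can drive $\mathcal A$ from a preset $(A,A)$ start at suitably chosen positions into a \wait state with that exact snapshot. The idea is to pick $(p_r,p_s)$ on the line through $q_r$ and $q_s$, and then either leave both robots idle (for the $(A,A)$ target), activate only the robot that must become $B$ for one complete cycle (for the $(A,B)$ or $(B,A)$ target), or activate both simultaneously for one complete cycle (for the $(B,B)$ target). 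In each case, the activated robot(s) turn $B$ via $A(A)=(B,\lambda)$ and compute destinations along the same line; non-rigidity lets the adversary stop each moving robot at the desired target point, provided $(p_r,p_s)$ are placed far enough apart that the $\delta$-progress lower bound is consistent with stopping there. A short case analysis over $\lambda\in\mathbb R$ (including the degenerate cases $\lambda=0$, $\lambda=1$, and $\lambda<0$, where the computed destination lies at, beyond, or outside the segment $\overline{p_rp_s}$) shows that a suitable choice of $(p_r,p_s)$ always exists.

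Once the target state with both robots in \wait is reached, obliviousness implies that $\mathcal A$ continues as though it had been started from the snapshot $(c_r,c_s,q_r,q_s)$. Composing our reachability schedule with any rigid \asynch adversary from that state yields a valid non-rigid \asynch adversary from the preset $(A,A)$ start, under which $\mathcal A$ must gather by assumption; hence, by obliviousness, $\mathcal A$ solves \rend starting from $(c_r,c_s,q_r,q_s)$ in rigid \asynch. Since this target was arbitrary, $\mathcal A$ is a class-$\mathcal L$, two-color algorithm solving \rend in rigid \asynch from every initial configuration, contradicting \rro.
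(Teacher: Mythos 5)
Your proposal is correct and takes essentially the same route as the paper: both reduce to \rro by using the transition $A(A)=(B,\lambda)$ guaranteed by \rrm, together with the adversary's freedom in non-rigid \asynch to stop moving robots early (or to choose the initial distance so that the multiplicative factor $|1-\lambda|$ or $|1-2\lambda|$ lands exactly on the target distance $d$), in order to reach any color configuration at any prescribed distance from the preset $(A,A)$ start. The paper simply makes your ``short case analysis'' explicit with the distances $d/|1-2\lambda|$, $d+2\delta$, $d/|1-\lambda|$, and $d+\delta$.
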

\begin{proof}
Let both robots be set to $A$ in the initial configuration, and let $d\geqslant 0$ be given. By \rrm, $A(A)=(B,\lambda)$, for some $\lambda\in\mathbb R$. If $\lambda\neq \oneh$, we place the two robots at distance $d/|1-2\lambda|$ from each other, and we let them perform a whole cycle simultaneously. If $\lambda=\oneh$, we place the robots at distance $d+2\delta$, and we let them perform a cycle simultaneously, but we stop them as soon as they have moved by $\delta$. As a result, both robots are now set to $B$, and at distance $d$ from each other. This means that any algorithm solving \rend with both robots set to $A$ must also solve it with both robots set to $B$, as well.

Similarly, we can place the two robots at distance $d/|1-\lambda|$ or $d+\delta$, depending if $\lambda\neq 1$ or $\lambda=1$. Then we let only one robot perform a full cycle, and we let it finish or we stop it after $\delta$, in such a way that it ends up at distance exactly $d$ from the other robot. At this point, one robot is set to $A$ and the other is set to $B$.

It follows that any algorithm for \rend must effectively solve it from all possible initial configurations. But this is impossible, due to \rro.
\end{proof}

\section{Conclusions}\label{s5}

We considered deterministic distributed algorithms for \rend for mobile robots that cannot use distance information, but can only reduce (or increase) their distance by a constant factor, depending on the color of the lights that both robots are carrying. We called this class of algorithms $\mathcal L$.

We gave several upper and lower bounds on the number of different colors that are necessary to solve \rend in different robot models. Based on these results, we can now give a complete characterization of the number of necessary colors in every possible model, ranging from fully synchronous to semi-synchronous to asynchronous, rigid and non-rigid, with preset or arbitrary initial configuration.

\begin{theorem}\label{r20}
To solve \rend with an algorithm of class $\mathcal L$ from a preset starting configuration,
\begin{itemize}
\item one color is sufficient for rigid and non-rigid \fsynch;
\item two colors are necessary and sufficient for rigid \ssynch, non-rigid \ssynch, and rigid \asynch;
\item three colors are necessary and sufficient for non-rigid \asynch.
\end{itemize}
To solve \rend with an algorithm of class $\mathcal L$ from an arbitrary starting configuration,
\begin{itemize}
\item one color is sufficient for rigid and non-rigid \fsynch;
\item two colors are necessary and sufficient for rigid and non-rigid \ssynch;
\item three colors are necessary and sufficient for rigid and non-rigid \asynch.
\end{itemize}
\end{theorem}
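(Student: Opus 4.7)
The plan is to assemble this complete characterization from the algorithms and impossibility results established in Sections~\ref{s2} and~\ref{s4}, exploiting the monotonicity of the model hierarchy in Figure~\ref{f0}. The key observation is that an algorithm that solves \rend in a more adversarial model (for instance non-rigid \asynch with arbitrary starting colors) also solves it in every less adversarial model below it, while an impossibility result for a less adversarial model propagates to every more adversarial model above it. Because each entry in the two lists has been either directly proved in the paper or reduces to one of these two monotonicity arguments, no new technical machinery is required.

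For the upper bounds, I would proceed top-down in each list. Both \fsynch rows are covered by \rrn (the midpoint algorithm), which uses a single color and hence is oblivious to whether the initial configuration is preset or arbitrary. For the \ssynch entries, \rra shows that Algorithm~\ref{alg1} solves \rend in non-rigid \ssynch from any initial colors; this covers rigid \ssynch and the preset variants by monotonicity. For rigid \asynch from a preset start, \rrb shows Algorithm~\ref{alg1} suffices, and the three-color upper bound for both \asynch rows (arbitrary colors, and therefore also preset) is given by Algorithm~\ref{alg3} via \rrx.

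For the lower bounds, the necessity of two colors in every \ssynch entry and in rigid \asynch from a preset start follows from \rrn, which rules out one-color algorithms in rigid \ssynch and therefore in every model containing it. The necessity of three colors in the two \asynch rows comes from the two halves of Section~\ref{s4}: \rrk for non-rigid \asynch from a preset start, and \rro for rigid \asynch from an arbitrary start. Propagating each of these upward through the model hierarchy covers the remaining \asynch entry, non-rigid \asynch with arbitrary colors.

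The main obstacle will simply be bookkeeping: each invocation of monotonicity must go in the correct direction, remembering that passing from rigid to non-rigid, from preset to arbitrary colors, or from \fsynch to \ssynch to \asynch all make the problem strictly harder, so upper bounds travel one way along these arrows and lower bounds travel the other way. One should also verify that membership in the class $\mathcal L$ is preserved under these reductions, which is immediate since $\mathcal L$ is a property of the algorithm and not of the model. Once these checks are performed bullet by bullet, every entry of the theorem is justified by a single citation, possibly composed with one monotonicity step.
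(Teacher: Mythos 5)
Your proposal is correct and follows essentially the same route as the paper: the upper bounds come from \rrn, \rra, \rrb, and \rrx, the lower bounds from \rrn, \rro, and \rrk, and everything else is filled in by the model inclusions of Figure~\ref{f0}, with upper bounds propagating down the hierarchy and lower bounds propagating up. The only cosmetic difference is that you make the monotonicity bookkeeping and the preservation of class $\mathcal L$ explicit, which the paper leaves implicit.
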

\begin{proof}
All the optimal color values derive either from previous theorems or from the model inclusions summarized in Figure~\ref{f0}.

That just one color is (necessary and) sufficient for all \fsynch models follows from \rrn.

\rrn also implies that, for all the other models, at least two colors are necessary. Therefore, by \rra, two colors are necessary and sufficient for all \ssynch models.

Similarly, \rrb states that two colors are necessary and sufficient for rigid \asynch with preset initial configuration. On the other hand, by \rro and \rrk, three colors are necessary in the three remaining models, and by \rrx three colors are also sufficient.
\end{proof}

In the three models in which three colors are necessary and sufficient, it remains an open problem to determine whether using distance information to its full extent would make it possible to use only two colors.

An interesting variation on this model is when the light on a robot can be seen only by the other robot(s). In this case, algorithms of class $\mathcal L$ are inadequate to solve \rend even in rigid \asynch with preset initial configuration, regardless of the number of available colors. In contrast, three colors are necessary and sufficient for all \ssynch models.

On the other hand, if the light is visible only to the robot that is carrying it (i.e., internal memory), then no algorithm of class $\mathcal L$ can solve \rend, even in rigid \ssynch with preset initial configuration, regardless of the number of colors.

\end{document}